\newtheorem{proposition}{Proposition}
\newtheorem{definition}{Definition}
\newtheorem{example}{Example}
\newtheorem{assumption}{Assumption}
\begin{document}

\title{Measuring and reducing the disequilibrium levels of dynamic networks through ride-sourcing vehicle data}

\author{Wei Ma, Zhen (Sean) Qian\\
	Department of Civil and Environmental Engineering\\ Carnegie Mellon University, Pittsburgh, PA 15213\\
	\{weima, seanqian\}@cmu.edu}
\maketitle

\begin{abstract}
Transportation systems are being reshaped by ride sourcing and shared mobility services in recent years. The transportation network companies (TNCs) have been collecting high-granular ride-sourcing vehicle (RV) trajectory data over the past decade, while it is still unclear how the RV data can improve current dynamic network modeling for network traffic management. This paper proposes to statistically estimate network disequilibrium level (NDL), namely to what extent the dynamic user equilibrium (DUE) conditions are deviated in real-world networks. Using the data based on RV trajectories, we present a novel method to estimate the real-world NDL measure. More importantly, we present a method to compute  zone-to-zone travel time data from trajectory-level RV data. This would become a data sharing scheme for TNCs such that, while being used to effectively estimate and reduce NDL, the zone-to-zone data reveals neither personally identifiable information nor trip-level business information if shared with the public. In addition, we present an NDL based traffic management method to perform user optimal routing on a small fraction of vehicles in the network. The NDL measures and NDL-based routing are examined on two real-world large-scale networks: the City of Chengdu with trajectory-level RV data and the City of Pittsburgh with zone-to-zone travel time data. We found that, on weekdays in each city, NDLs are likely high when travel demand is high (thus when congestion is mild or heavy). Generally, a weekend midnight exhibits higher NDLs than a weekday midnight. Many NDL patterns are different between Chengdu and Pittsburgh, which are attributed to unique characteristics of both demand and supply in each city. For instance, NDL in Pittsburgh is much more stable from day to day and from hour to hour, comparing to Chengdu. In addition, we observe that origin-destination pairs with high NDLs are spatially and temporally sparse for both cities. For the Pittsburgh network, we evaluate the effectiveness of NDL-based traffic routing, which shows great potential to reduce total travel time with routing a small fraction of vehicles (1\% in the experiments), even using dated NDL that is estimated in the prior hour.
\end{abstract}
\section{Introduction}
The emerging ride-sourcing services are reshaping the urban transportation systems. Their impacts are two-fold. On one hand, ride-sourcing services added complications to the transportation system, which presents challenges to traditional traffic modeling and management. On the other hand, ride-sourcing services produce massive data ({\em e.g.} vehicle trajectories) that can be potentially leveraged to better understand and manage the whole transportation system. Transportation network companies (TNCs) have been collecting high-granular ride-sourcing vehicle (RV) data over the past decade, while there is a lack of studies on how the RV data can improve current traffic network models and management strategies. In view of this, this paper presents a data-driven approach to connect traffic network models with RV data, and to build an integrated traffic management framework. The main contributions of this study are two-fold: using RV data, 1) this paper evaluates whether the dynamic user equilibrium (DUE) holds and to what extent it is violated in real-world networks; 2) this paper also proposes a holistic framework to manage the traffic network through analyzing ride-sourcing vehicle data. The framework proposes a novel estimation method for network disequilibrium level (NDL) with RV data, a data sharing scheme for TNCs to release aggregated data based on RV without revealing personally identifiable information, and an NDL-based user optimal routing algorithm.



With the boom of smartphones and mobile Internet, on-demand ride-sourcing services are emerging and becoming an indispensable component of urban transportation systems \citep{rayle2016just}. The on-demand ride-sourcing service refers to a service mode that private car owners drive their own vehicles to provide for-hire rides, and the service is usually operated by  transportation network companies (TNCs) such as Uber, Lyft and DiDi chuxing. TNCs match the riders and drivers in real-time and instruct drivers to pick-up and drop-off riders through a real-time routing mechanism. According to a recent report \citep{uber}, Uber alone has covered $551$ cities globally and surpassed two billion rides by July $2016$.

The proliferation of ride-sourcing services has profoundly reshaped of transportation systems and hence stimulated broad discussions and research from various perspectives, including labor market of ride-sourcing drivers and rider behaviors \citep{hall2015analysis, rayle2016just, kooti2017analyzing}, stochastic vehicle dispatching \citep{miao2017data}, pricing strategies \citep{chen2016dynamic, bimpikis2016spatial}, optimal parking provision \citep{xu2017optimal} as well as policies and regulations \citep{ranchordas2015does, zha2016economic, zha2017surge}. RV-related data sets include, but are not limited to:
\begin{enumerate}[label=\roman*)]
	\item {\em Survey data.} The survey data is obtained from surveys. A survey is usually conducted by researchers or public agencies to query riders' or drivers' opinions associated with ride-sourcing services. For example, survey on people's willingness to use ride-sourcing services \citep{dias2017behavioral}, and survey on riders' incentives to switch between Uber and Lyft \citep{di2017switching}.
	\item {\em Zone-to-zone travel time.} Zone-to-zone travel time can be estimated from RV trajectories. For example, Uber Movement provides high-granular zone-to-zone travel time data in many major cities. \citet{pearson11traffic} studied mobility patterns and flow characteristics using Uber Movement data.
	\item {\em Trip-level information:} Trip-level information includes the location of trip origins and destinations, total trip length, and corresponding fares. \citet{zha2017surge} studied the supply-demand relationship on ride-sourcing market using DiDi chuxing trip-level data, and \citet{chen2016dynamic} reveals a positive labor supply elasticity using Uber trip-level data. The trip-level data can also estimate and predict the rider request demand \citep{ke2017short, li2018analysis}.
	\item {\em Trajectory-level information.} Trajectory level information includes detailed second-by-second location of RVs. The main difference between the trajectory-level information and trip-level information is that the trajectories of RVs can be reconstructed with the trajectory-level information, but trip-level information only provides the origins and destinations. \citet{yin2018kalman} estimated vehicle queue length at intersections and \citet{zheng2018traffic} optimized the traffic signal with DiDi chuxing trajectory-level data.
\end{enumerate}

Rich RV datasets present great opportunities to address problems that would be challenging with traditional traffic data and models.  For example, there is a lack of research on whether dynamic user equilibrium (DUE) holds and to what extent it is violated in real-world networks. The dynamic user equilibrium refers to the assumption of network conditions in which no traveler can reduce his/hers disutility (including travel time and fares) by unilaterally changing his/hers route choices (including departure time and route choice) \citep{mahmassani1984dynamic, friesz1993variational, boyce1995solving, huang2002modeling, nie2010solving, friesz2010dynamic}. Essentially, DUE describes how people behave under recurrent traffic conditions.
DUE is often used as the underlying travel behavior model for a wide spectrum of applications including network design, traffic and signal control, origin-destination demand estimation, etc. \citep{josefsson2007sensitivity}. Very few study validates the DUE due to the lack of vehicle trajectory data.

As DUE implies network traffic conditions are precisely known to all travelers and they all make the efforts reaching the optimal routes, many studies challenged the DUE in modeling travelers' behaviors and network conditions in real-world networks \citep{nakayama2001drivers}. \citet{levinson2003value, ben2013impact, nakayama2016effect} showed the accuracy of traffic information provided to travelers has a significant impact on their route selection and the resulting network conditions. \citet{zhu2015people, jan2000using} conducted empirical studies using GPS data and found that most travelers do not follow shortest paths in the network, and some studies indicate that travelers are likely to follow the shortest distance paths \citep{bekhor2006evaluation, prato2006applying} or hyperpaths \citep{ma2013hyperpath}.  Studies also explored alternative models with milder assumptions, such as bounded rationality \citep{mahmassani1987boundedly, di2016boundedly}, statistical traffic equilibrium \citep{nakayama2014consistent, GESTA}, and mean-excess traffic equilibrium \citep{chen2010alpha}.

To the best of our knowledge, none of the previous literature has intensively studied whether and to what extent DUE is violated in real-world networks. In this paper, we discuss and evaluate the concept of network disequilibrium level (NDL) to measure how far real networks are away from DUE, and how NDL varies by location and time.
High NDL may imply an inefficient network in which travelers are unable to choose optimal paths based on their perceived traffic information \citep{ben1991dynamic}, while low NDL implies that DUE is approximately achieved \citep{boyce1995solving}.

The real-time NDL can be used to infer network conditions and information inefficiency in traffic networks and thus can be used for real-time traffic management. The traffic management problem has been extensively studied for decades, readers are referred to \citet{papageorgiou2003review, gao2005optimal, braekers2016vehicle} for a comprehensive review. A number of traffic management studies aim at achieving DUE conditions with in-vehicle communication devices. For example, \citet{wang2001feedback} proposed a feedback and iterative routing scheme to drive the network towards DUE condition, and \citet{du2015coordinated} presented a framework to allow travelers to coordinate with each other to achieve DUE conditions. However, none of the previous studies have studied traffic management strategies for DUE using RV data. 


Data privacy is another important issue of using RV data for transportation management. TNCs are reluctant to share the trip-level and trajectory-level information data, as the traveler's identity and activities might be revealed from the data even with standard de-identification process \citep{de2015unique}. Especially after the Facebook--Cambridge Analytica data scandal \citep{wikifacebook}, IT companies become even more cautious about releasing any information that is potentially related to individuals. Till now, there is a lack of data sharing scheme that allows TNCs to share data based on RV that is 1) aggregated without revealing personally identifiable information, 2) aggregated without revealing sensitive trip-level business information, and 3) proven to be useful to transportation management and planning.

To conclude, the following four questions have not been addressed in the previous studies:
\begin{enumerate}[label=\arabic*)]
	\item What are the unique characteristics of RV data and how do the RV data contribute to traffic network models?
	\item How to evaluate the real-world network disequilibrium level (NDL) in dynamic traffic networks using RV data?
	\item Is there an effective data sharing scheme for TNCs to share aggregated data based on RV that contains neither personally identifiable information nor business sensitive information?
	\item How do the RV data and the NDL estimates contribute to real-world traffic management?
\end{enumerate}

By integrating the traffic network model and RV data, we address the above four questions in this paper. This paper develops a novel theoretical framework for estimating real-world NDLs with trajectory-level or zone-to-zone data based on RVs. A data sharing scheme with zone-to-zone RV data for TNCs is proposed to ensure protecting user privacy and sensitive business information, while the zone-to-zone data still being effective in modeling and managing traffic. This paper also rigorously formulates the traffic management problem with NDL measures and proposes a real-time NDL-based traffic management method to achieve user optimal routing. Finally, we examine the proposed NDL measures on two large-scale real networks using two sources of RV data, respectively. 

The rest of this paper is organized as follows. Section \ref{sec:notation} presents the notations used in this paper. Section~\ref{sec:ndl} discusses the definition and formulation of NDL measure, followed by Section~\ref{sec:ndl2} presenting the estimation of NDL with RV data and the NDL-based traffic management method. In Sections~\ref{sec:didi} and \ref{sec:uber}, two large-scale networks are used to examine the proposed NDL measure and traffic management method. Finally, conclusions are drawn in Section \ref{sec:con}.

\section{Notations}
\label{sec:notation}
All the notations will be introduced in context, and Table~\ref{tab:notation} provides a summary of the basic notations to be referred in the reminder of this paper.

\begin{longtable}{p{2cm}p{13cm}}
	\caption{\footnotesize List of notations}
	\label{tab:notation}
	\endfirsthead
	\endhead
	\multicolumn{2}{c}{\textbf{Network Variables}}\\
	$K_q$ & The set of all origin-destination (OD) pairs\\
	$K_{rs}$ & The set of all paths between OD pair $rs$\\
	
	
	\multicolumn{2}{c}{}\\
	\multicolumn{2}{c}{\textbf{Network flow related variables}}\\
	$t$ & The departure time of path flow or OD flow, and $t$ can be either time point in continuous time space or time interval in discrete time space\\
	$T$ & The set of all possible departure time from all path and OD flow\\
	$F_{rs}^{k}(t)$ & The $k$th path flow for OD pair $rs$ departing at time $t$\\
	$Q_{rs}(t)$ & The flow of OD pair $rs$ departing at time $t$\\
	$C_{rs}^{k}(t)$ & The path cost for path $k$ for OD pair $rs$ departing at time $t$\\
	$C_{rs}(t)$ & The average OD cost for OD pair $rs$ departing at time $t$\\
	$p_{rs}^{k}(t)$ & The route choice portion of choosing path $k$ in all paths between OD pair $rs$ at time $t$\\
	\multicolumn{2}{c}{}\\
	\multicolumn{2}{c}{\textbf{RV related variables}}\\
	$N$ & The set of all orders of ride-sourcing service  (orders will be defined in section~\ref{sec:ndl2})\\
	$v_{i}$ & Trajectory record of $i$th order\\
	$N_{rs}(t)$ & The set of all orders departing from $r$ to $s$ at time $t$\\
	$D_{rs}(t)$ & The network disequilibrium level for OD pair $r-s$ departing at time $t$\\
	$\gamma_{i}$ & The travel time to complete the trip for ride-sourcing vehicle $v_i$\\
	$\kappa_{j}$ & The new travel time to complete the trip for the $j$th zone-to-zone ride-sourcing vehicle record\\
	$\tilde{N}_{rs}(t)$ & The set of all zone-to-zone ride-sourcing vehicle records departing from $r$ to $s$ at time $t$\\
	$\tilde{C}_{rs}(t)$ & The empirical cost for path $k$ for OD pair $rs$ departing at time $t$ estimated by zone-to-zone ride-sourcing vehicle records\\
	$\tilde{C}_{r \tau s}(t)$ & The alternative path cost for path $k$ for OD pair $rs$ departing at time $t$ using $\tau$ as connecting points\\
\end{longtable}

\section{Modeling the network disequilibrium level}
\label{sec:ndl}

In this section, we briefly review the concepts of dynamic user equilibrium (DUE). We discuss the violation of DUE conditions and its consequences. The concept of network disequilibrium level (NDL) is proposed to quantitatively measure to what extent a network deviates from DUE. We will also discuss difficulties to estimate NDL using traditional traffic data.



The Dynamic User Equilibrium (DUE) is presented in Definition~\ref{df:due}.

\begin{definition}[Dynamic User Equilibrium]
	\label{df:due}
	In a road network $\G = (V,E)$, where $V$ and $E$ represent all the intersections and directed road segments in the directed graph $\G$. We denote $F_{rs}^{k}(t)$ as the path flow departing at time $t$ for $k$th path of origin-destination (OD) pair $rs$, and $C_{rs}^{k}(t)$ as the  travel time to complete the trip for the flow of $k$th path of OD $rs$ departing at time $t$. We note the time index $t$ can represent either a time point or a time interval in this paper. The network is in dynamic user equilibrium (DUE) if Equation~\ref{eq:due} holds.
	
	\begin{eqnarray}
	\label{eq:due}
	F_{rs}^{k}(t)\left(C_{rs}^{k}(t) - \Pi_{rs}(t) \right) = 0, \forall k \in K_{rs}, rs \in K_q, t \in T
	\end{eqnarray}
	where
	\begin{eqnarray}
	\Pi_{rs}(t) &=& \min_{k \in K_{rs}} C_{rs}^{k}(t), \forall rs \in K_q, t \in T\\
	F_{rs}^k(t) &\geq& 0, \forall k \in K_{rs}, rs \in K_q, t \in T
	\end{eqnarray}
\end{definition}

According to the direct interpretation of Definition~\ref{df:due}, DUE represents the traffic conditions in which no travelers can unilaterally change their routes to achieve earlier arrival time. Travelers do not choose the routes that exceed the minimal travel time, and only the routes with minimal travel time $\Pi_{rs}(t)$ are chosen.



We can illustrate DUE from the perspective of perfect information assumption. DUE assumes that travelers have perfect information about network conditions, and they always choose the user optimal path in the network. To be precise, there are two different concepts involved: 1) A travelers has incentives to choose a route that is the best one given his/her limited knowledge and efforts. We view that route he/she takes as the actual path; 2) if the travel time of all roads at any particular time of day were known, we can compute the optimal path based on the actual time-varying travel time. This path is viewed as a user optimal path. We note the actual path taken by travelers can base on travelers' experiences, traffic information (e.g. Google Maps) or en-route decisions, while the user optimal path is determined by the actual road travel time. One can clearly see when DUE holds, the actual paths and user optimal paths are the same. In contrast, the network disequilibrium level can be measured by the difference between the actual path taken by travelers and the user optimal path.





In the following sections, we will discuss how the network disequilibrium level is quantitatively formulated and computed.
First of all, we formally discuss the general concept of network disequilibrium level.

\subsection{Network disequilibrium level (NDL)}

Any measure of network disequilibrium level should satisfy Definition~\ref{df:dis}.
\begin{definition}
	\label{df:dis}
	Denote $D_{rs}(t)$ as the network disequilibrium level (NDL) for OD pair $rs$ departing at time $t$, then $D_{rs}(t)$ satisfies the following property:
	
	\begin{itemize}
		\item $D_{rs}(t) \in [0, \infty)$
		\item $D_{rs}(t) = 0$ when Equation~\ref{eq:due} holds for $rs \in K_{q}$ and $t \in T$.
	\end{itemize}
\end{definition}

We first state two different definitions of NDL in Definition~\ref{eq:dis11} and Definition~\ref{eq:dis1}, both of which stem from the merit function in DUE formulations \citep{lu2009equivalent, nie2010solving}.

\begin{definition}[Merit function-based NDL]
	The merit function-based NDL is presented in Equation~\ref{eq:dis11}.
	\label{def:dis11}
	\begin{eqnarray}
	\label{eq:dis11}
	D_{rs}^{\mathcal{M}}(t) = \sum_{k \in K_{rs}} F_{rs}^{k}(t)\left(  C_{rs}^{k}(t) - \Pi_{rs}(t) \right)
	\end{eqnarray}
\end{definition}

Definition~\ref{def:dis11} is a direct adaptation of Equation~\ref{df:due}. It is also possible to measure the NDL by the percentage deviation of the minimal path travel time. For example, Equation~\ref{eq:dis11} can be normalized by dividing $\Pi_{rs}(t)$. In this paper, we focus on the original definition in Equation~\ref{eq:dis11}. We note the path flow can be computed through the route choice portion $p_{rs}^{k}(t)$, the portion of flow choosing path $k$ in all paths between OD pair $rs$ at time $t$, as presented in Equation~\ref{eq:route}.

\begin{eqnarray}
\label{eq:route}
F_{rs}^{k}(t) = p_{rs}^{k}(t)Q_{rs}(t)
\end{eqnarray}

Definition~\ref{def:dis1} is a flow-scale-free version of Definition~\ref{def:dis11}, meaning the NDL is independent of the scale of path flow.

\begin{definition}[Flow-scale-free NDL]
	\label{def:dis1}
	The flow-scale-free NDL is presented in Equation~\ref{eq:dis1}.
	\begin{eqnarray}
	\label{eq:dis1}
	D_{rs}^{\mathcal{F}}(t) = \sum_{k \in K_{rs}} p_{rs}^{k}(t)\left(  C_{rs}^{k}(t) - \Pi_{rs}(t) \right)
	\end{eqnarray}
\end{definition}

The merit function-based NDL can be viewed as the total NDL, while the flow-scale-free NDL can also be viewed as the average of total NDL for each traveler in OD pair $rs$, as shown in Equation~\ref{eq:rela}. 

\begin{eqnarray}
\label{eq:rela}
D_{rs}^{\mathcal{M}}(t) = Q_{rs}(t)D_{rs}^{\mathcal{F}}(t), \forall rs \in K_q, t \in T
\end{eqnarray}

We further define the origin-based NDL and destination-based NDL by aggregating NDLs of all OD pairs to origins and destinations, as presented in Equation~\ref{eq:odndl1} and~\ref{eq:odndl2}.

\begin{eqnarray}
\label{eq:odndl1}
D_{r}(t)&=& \sum_{s} D_{rs}(t)\\
\label{eq:odndl2}
D_{s}(t) &=& \sum_{r} D_{rs}(t)
\end{eqnarray}

In this paper, we assume that the travelers' ``true'' disutility is represented by travel time \citep{papinski2009exploring}, while other factors such as distances, road tolls, route preferences are not considered \citep{zhu2015people}. The definition of NDL can be extended to include the generalized disutility of travelers, while additional information might be required to compute the NDL.
The NDL is attributed to travelers' inability to know what the shortest-time route is. Many other possible causes of NDL exist \citep{mahmassani1987boundedly, shao2006reliability, di2016boundedly, GESTA, yu2018day}. This paper will focus on evaluating and reducing the NDL first, while exploring the causes of NDL will be left for future research.

Evaluating NDL requires full knowledge of path flow.
However, obtaining time dependent path flow (or route choice) is notoriously difficult in large-scale networks given limited number of surveillance sensors \citep{lu2013dynamic}.
In addition, estimation methods for dynamic path flow (or dynamic route choice) are usually based on the assumption that networks are in DUE. Therefore, computing NDLs through Equation~\ref{eq:dis1} and Equation~\ref{eq:dis11} becomes implausible on large-scale networks. To have a plausible definition of the NDL, characteristics of ride-sourcing vehicle (RV) data will be discussed and utilized. 
In the following sections, we present two ways to measure the network disequilibrium levels with either trajectory-level (Section~\ref{sec:disagg}) or zone-to-zone (Section~\ref{sec:agg}) RV data.

\section{Estimating NDL with ride-sourcing vehicle data}
\label{sec:ndl2}
Section~\ref{sec:ndl} has defined and discussed the concept of NDL. This section focuses on how to estimate the NDL measures in the real-world using RV data only. We first discuss the characteristics of RV data, then the trajectory-level RV data is used to estimate the NDL. Due to the privacy concerns for the trajectory-level data, we propose a data sharing scheme to compute and share RV data that are aggregated to zone-to-zone. A method to estimate the NDL with the zone-to-zone RV data is further proposed. Lastly, we build a real-time traffic management framework for user optimal routing based on NDL.

\subsection{Ride-sourcing vehicle data}
\label{sec:ride}

In this sub-section, we discuss the characteristics of RV data. We define an ``order'' as one transaction of a trip completed by a pair of a driver and a rider, which starts from the pick-up of the rider and ends with the drop-off of the rider. The cruising process and the process between accepting the rider request and picking up the rider are not considered, since the behaviors of RVs during these two processes may not follow the pattern of a trip generated by travelers. For example, Uber allows drivers to accept a new rider request before finishing current order, hence the trajectory of RV between accepting the order and picking up the riders can be very random. For the behaviors of RVs during an order, we argue Assumption~\ref{as:uni} holds.

\begin{assumption}[Uniform sampling]
	\label{as:uni}
	The ride-sourcing vehicles from origin $r$ to destination $s$ departing at time $t$ are uniformly distributed among all vehicles that depart from origin $r$ to destination $s$ at time $t$. In other words, the route choice behaviors of RVs are the same as conventional non-sharing vehicles. The probability of an RV departing from $r$ to $s$ at $t$ to choose path $k$ equals to $p_{rs}^{k}(t)$, which is the route choice probability of all vehicles departing from $r$ to $s$ at $t$.
\end{assumption}

Assumption~\ref{as:uni} claims that the route choice behaviors of RVs are the same as the conventional vehicles, hence the RV trajectories are uniformly sampled from all trajectories in the network.
Assumption~\ref{as:uni} is attributed to that RV trips may approximate personal vehicle trips. Both the drivers and riders of RVs are more likely to exhibit a wide spectrum of socio-demographics and driving behavior that are close to non-sharing trips than other biased probe samples (such as taxis or trucks). We note the market penetration rate of RVs for different OD pairs and departure times can be different as long as the RVs are uniformly distributed among all vehicles.

The behaviors of trucks usually do not follow Assumption~\ref{as:uni}. Trucks usually operate on highways and major roads due to the prohibitions and road limitations in urban areas, and truck drivers usually follow fixed routes. The objectives of truck drivers are to deliver goods on time, and some trucks may not be incentivized to arrive as early as possible. Instead, they may prefer behaviors that save fuel use.


The behaviors of taxi drivers do not follow Assumption~\ref{as:uni} either. Taxi drivers usually work full-time, and they generally have more driving experiences than an average traveler \citep{shi2014survey}. Therefore, using taxi data to represent the average drivers might introduce a substantial bias. On the contrast, ride-sourcing vehicle drivers can work either full-time or part-time, and their socio-demographics may be more representative (thus less biased) \citep{hall2018analysis}. 

Having randomly sampled personal vehicle data would be ideal. 
However, there is a significant portion of the personal vehicles with no GPS or data transmission device installed, which also introduce a sampling bias. In addition, tracking personal vehicles is controversial, as most of the Americans believe this kind of tracking is extremely invasive \citep{cnn}. Till now, there is no public dataset available for personal vehicle trajectories.






The unique characteristics of RVs enable them being a possibly representative trip sample in the complex real-world network. Though drivers do their best to choose the optimal route, their actual paths may not turn out to be the optimal paths. This fact can be utilized to construct a measure of network disequilibrium levels, and details will be discussed in the following sections. We will mathematically show that a very small fraction of RVs can actually approximate disequilibrium measures.

\subsection{Estimating NDL with trajectory-level RV data}
\label{sec:disagg}

In this sub-section, we estimate NDL with trajectory-level RV data. We note that the trip-level RV data can also be used to estimate NDL. We focus on the trajectory-level RV data from now on.

We denote an order as $v_{i}$, where $i$ is the index of order, and $i \in N$ where $N$ denotes the set of all orders. During each order, the RV sends its location information to TNCs frequently. Each order contains a sequence of location and time information (which are referred as messages), as presented in Equation~\ref{eq:trajectory}.
\begin{eqnarray}
\label{eq:trajectory}
v_{i} = \{(t_{i}^{0}, l_{i}^{0}), (t_{i}^1, l_{i}^1), (t_{i}^2, l_{i}^2), \cdots, (t_{i}^{T_{i}-1}, l_{i}^{T_{i}-1}) \}
\end{eqnarray}
where $T_i$ is denoted as the number of messages in order $i$. $t_i^h$ denotes the $h$th time stamp when the information is sent, and $l_i^h$ is the location of the RV at time $t_i^h$. We divide all the orders into different subsets, and $N_{rs}(t)$ denotes the set of orders that departs from $r$ at time $t$ and arrives at $s$, as presented in Equation~\ref{eq:match}.
\begin{eqnarray}
\label{eq:match}
N_{rs}(t) = \left\{i \in N | t_{i}^0 = t , \xi(l_{i}^0) = r ,  \xi\left(l_{i}^{T_{i}-1}   \right) =s \right\}
\end{eqnarray}
where $\xi(\cdot)$ matches the location to its corresponding traffic analysis zone (TAZ). We further denote $\gamma_{i}$ as the travel time to complete the RV trip $i$, and it can be computed by Equation~\ref{eq:gamma}.

\begin{eqnarray}
\label{eq:gamma}
\gamma_{i} =  t_{i}^{T_{i}-1} - t_{i}^0
\end{eqnarray}

For each OD pair $rs$, $\gamma_i, i \in N_{rs}(t)$ is a sample of path travel time $C_{rs}^k(t)$ if the RV trip $v_{i}$ takes route $k$ and departs at time $t$. Based on Assumption~\ref{as:uni}, the probability of $v_i$ being on path $k$ is $p_{rs}^k(t)$. 
We are now ready to present an estimator for NDL using trajectory-level RV data in Equation~\ref{eq:dis2}.

\begin{eqnarray}
\label{eq:dis2}
D_{rs}^{\mathcal{D}}(t) = \frac{1}{|N_{rs}(t)|} \sum_{i \in N_{rs}(t)}  \gamma_{i} - \min_{i \in N_{rs}(t)} \gamma_{i}
\end{eqnarray}

First we prove that $D_{rs}^{\mathcal{D}}(t)$ approximates $D_{rs}^{\mathcal{F}}$ when the number of sampled orders is sufficiently large, as presented in Proposition~\ref{prop:dis1}.

\begin{proposition}
	\label{prop:dis1}
	When $|N_{rs}(t)| \to \infty, \forall rs \in K_{q}, t \in T$, we have
	\begin{eqnarray}
	D_{rs}^{\mathcal{D}}(t)  \xrightarrow{P} D_{rs}^{\mathcal{F}}(t)
	\end{eqnarray}
\end{proposition}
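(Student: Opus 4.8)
The plan is to decompose $D_{rs}^{\mathcal{D}}(t)$ into its two pieces---a sample mean and a sample minimum---and to show that each converges in probability to the corresponding component of $D_{rs}^{\mathcal{F}}(t)$. First I would rewrite the target limit in a form that separates cleanly. Because the route choice proportions sum to one, $\sum_{k \in K_{rs}} p_{rs}^{k}(t) = 1$, Equation~\ref{eq:dis1} can be recast as
\begin{eqnarray}
D_{rs}^{\mathcal{F}}(t) = \sum_{k \in K_{rs}} p_{rs}^{k}(t) C_{rs}^{k}(t) - \Pi_{rs}(t),
\end{eqnarray}
so the goal is to show $D_{rs}^{\mathcal{D}}(t)$ converges to the \emph{expected} path travel time minus the \emph{minimum} path travel time.

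Next I would use Assumption~\ref{as:uni} to pin down the probabilistic model. For $i \in N_{rs}(t)$, each $\gamma_{i}$ is an independent draw from the discrete distribution placing mass $p_{rs}^{k}(t)$ on the value $C_{rs}^{k}(t)$. Since $K_{rs}$ is finite, this distribution is bounded and has a well-defined mean $E[\gamma_{i}] = \sum_{k \in K_{rs}} p_{rs}^{k}(t) C_{rs}^{k}(t)$. The convergence then splits in two. For the averaging term, I would invoke the weak law of large numbers: as the $\gamma_{i}$ are i.i.d.\ with finite mean, $\frac{1}{|N_{rs}(t)|}\sum_{i \in N_{rs}(t)} \gamma_{i} \xrightarrow{P} E[\gamma_{i}]$. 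For the minimum term, I would show $\min_{i \in N_{rs}(t)} \gamma_{i} \xrightarrow{P} \Pi_{rs}(t)$.

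The argument for the minimum is the crux. Observe that $\min_{i} \gamma_{i} \geq \Pi_{rs}(t)$ holds deterministically, with equality exactly when at least one sampled vehicle uses a minimum-cost path. Writing $p_{\min} = \sum_{k : C_{rs}^{k}(t) = \Pi_{rs}(t)} p_{rs}^{k}(t)$ for the total probability of selecting a shortest path, the event that no sampled vehicle takes such a path has probability $(1 - p_{\min})^{|N_{rs}(t)|}$, so for any $\epsilon > 0$,
\begin{eqnarray}
P\left( \min_{i \in N_{rs}(t)} \gamma_{i} - \Pi_{rs}(t) > \epsilon \right) \leq (1 - p_{\min})^{|N_{rs}(t)|} \to 0
\end{eqnarray}
as $|N_{rs}(t)| \to \infty$. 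Combining the two limits, and using that convergence in probability is preserved under subtraction, the difference converges in probability to $E[\gamma_{i}] - \Pi_{rs}(t)$, which equals $D_{rs}^{\mathcal{F}}(t)$ by the reformulation above.

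The main obstacle is the sample-minimum term, and specifically the requirement $p_{\min} > 0$, that some shortest path is actually chosen with positive probability. If every minimum-cost path carried zero flow, the sample minimum would instead converge to the smallest cost among positively-used paths, which strictly exceeds $\Pi_{rs}(t)$, and the claim would fail. I would therefore make explicit the mild and practically reasonable assumption that $\Pi_{rs}(t)$ is attained on a path carrying positive flow---precisely the condition under which the true minimum travel time is empirically recoverable from the RV sample. The averaging term, by contrast, is a routine application of the law of large numbers and poses no difficulty.
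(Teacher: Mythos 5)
Your proof is correct and follows essentially the same route as the paper's: a law-of-large-numbers argument for the sample mean combined with a convergence argument for the sample minimum, assembled via preservation of convergence in probability under subtraction (the paper invokes the continuous mapping theorem for the same step). Your explicit bound $(1-p_{\min})^{|N_{rs}(t)|}$ and the accompanying observation that one must assume $p_{\min}>0$ (i.e., some minimum-cost path is chosen with positive probability) is a worthwhile refinement: the paper's version of the minimum argument asserts $\mathbb{F}_{\gamma_i}(y)>0$ for all $y>\Pi_{rs}(t)$ without comment, which silently relies on exactly that condition.
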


\begin{proof}
	We compute the expectation of $\gamma_{i}$ in each subset $N_{rs}(t)$ based on Assumption~\ref{as:uni},
	\begin{eqnarray}
	\mathbb{E}_{i}\left[\gamma_{i}\right] = \sum_{k \in K_{rs}}p_{rs}^{k}(t) C_{rs}^{k}(t), \forall i \in N_{rs}(t)
	\end{eqnarray}
	By Law of Large Numbers (LLN), we have
	\begin{eqnarray}
	\frac{1}{|N_{rs}(t)|} \sum_{i \in N_{rs}(t)}  \gamma^{i} \xrightarrow{P} \sum_{k \in K_{rs}} p_{rs}^{k}(t) C_{rs}^{k}(t), \forall i \in N_{rs}(t)
	\end{eqnarray}
	Then the cumulative distribution function  $\mathbb{F}_{\cdot}(y)$ of $\min_{i \in N_{rs}(t)} \gamma_i$ is presented in Equation~\ref{eq:mintt}.
	\begin{eqnarray}
	\label{eq:mintt}
	\mathbb{F}_{\min_{i \in N_{rs}(t)} \gamma_i}\left(y\right) = 1 - \prod_{i \in N_{rs}(t)}\left( 1 -  \mathbb{F}_{\gamma_{i}}(y)\right)
	\end{eqnarray}
	where
	\begin{eqnarray}
	\mathbb{F}_{\gamma_i}(y)  \begin{cases}
	=0 &\text{if $y \leq \Pi_{rs}(t)$ and $i \in N_{rs}(t)$ }\\
	> 0 & \text{else}
	\end{cases}
	\end{eqnarray}
	When $|N_{rs}(t)| \to \infty$, we have
	\begin{eqnarray}
	\mathbb{F}_{\min_{i \in N_{rs}(t)} \gamma_i}\left(y\right) =
	\begin{cases}
	0 &\text{if $y \leq \Pi_{rs}(t)$}\\
	1 & \text{else}
	\end{cases}
	\end{eqnarray}
	Hence,
	\begin{eqnarray}
	\min_{i \in N_{rs}(t)} \gamma_i \xrightarrow{P} \Pi_{rs}(t)
	\end{eqnarray}
	Therefore Proposition~\ref{prop:dis1} holds based on continuous mapping theorem.
\end{proof}

Secondly, we show that $D_{rs}^{\mathcal{D}}(t)$ is a biased estimator with finite data, as presented in Proposition~\ref{prop:biased}.

\begin{proposition}[Biased NDL estimator]
	\label{prop:biased}
	If $|N_{rs}(t)| < \infty$, then
	\begin{eqnarray}
	\mathbb{E} \left[D_{rs}^{\mathcal{D}} (t)\right] \geq D_{rs}^{\mathcal{F}}(t)
	\end{eqnarray}
\end{proposition}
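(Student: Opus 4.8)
The plan is to take the expectation of the trajectory-level estimator $D_{rs}^{\mathcal{D}}(t)$ in Equation~\ref{eq:dis2} and compare it, term by term, against the flow-scale-free NDL $D_{rs}^{\mathcal{F}}(t)$ in Equation~\ref{eq:dis1}. Writing the estimator as the difference of a sample-mean term and a sample-minimum term, linearity of expectation gives $\mathbb{E}\!\left[D_{rs}^{\mathcal{D}}(t)\right] = \mathbb{E}\!\left[\frac{1}{|N_{rs}(t)|}\sum_{i \in N_{rs}(t)} \gamma_i\right] - \mathbb{E}\!\left[\min_{i \in N_{rs}(t)} \gamma_i\right]$. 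For the first term I would reuse the computation from the proof of Proposition~\ref{prop:dis1}: under Assumption~\ref{as:uni} each $\gamma_i$ has mean $\sum_{k \in K_{rs}} p_{rs}^{k}(t) C_{rs}^{k}(t)$, so the sample-mean term is unbiased and equals $\sum_{k \in K_{rs}} p_{rs}^{k}(t) C_{rs}^{k}(t)$ for any sample size. Since $D_{rs}^{\mathcal{F}}(t) = \sum_{k \in K_{rs}} p_{rs}^{k}(t) C_{rs}^{k}(t) - \Pi_{rs}(t)$, the whole statement collapses onto a single comparison: a bound on the sample-minimum term relative to $\Pi_{rs}(t)$.

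The engine I would use for the inequality is convexity. Viewed as a function of the sample $(\gamma_i)_{i \in N_{rs}(t)}$, the estimator $D_{rs}^{\mathcal{D}}(t)$ is convex, because the averaging term is linear and $-\min_{i} \gamma_i = \max_{i}(-\gamma_i)$ is a maximum of linear maps and hence convex. Jensen's inequality therefore yields a \emph{lower} bound on $\mathbb{E}\!\left[D_{rs}^{\mathcal{D}}(t)\right]$, which is exactly the direction demanded by a ``$\geq$'' statement. The remaining task is to show that this Jensen lower bound coincides with $D_{rs}^{\mathcal{F}}(t)$; concretely, I would identify the deterministic reference configuration of travel times at which to evaluate $D_{rs}^{\mathcal{D}}(t)$ so that the averaging term reproduces $\sum_{k} p_{rs}^{k}(t) C_{rs}^{k}(t)$ while the minimum term reproduces $\Pi_{rs}(t)$, thereby recovering Equation~\ref{eq:dis1} as the value against which the random estimator is bounded.

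The main obstacle is the sample-minimum term, which behaves very differently from the sample mean. It is a nonlinear, finite-sample-biased functional, and it is statistically dependent on the averaging term because the realizing order enters both, so the two terms cannot be separated as if independent. The delicate step is therefore to control $\mathbb{E}\!\left[\min_{i \in N_{rs}(t)} \gamma_i\right]$ against $\Pi_{rs}(t)$ for finite $|N_{rs}(t)|$, for which I would work directly with the distribution of the minimum order statistic in Equation~\ref{eq:mintt}, using that each $\gamma_i$ is supported on $[\Pi_{rs}(t), \infty)$ with positive mass at the fastest-path cost. Reconciling this finite-sample analysis with the limiting identity $\min_{i} \gamma_i \to \Pi_{rs}(t)$ from Proposition~\ref{prop:dis1}, and verifying that the induced finite-sample gap propagates through the subtraction in Equation~\ref{eq:dis2} in the claimed direction, is precisely where I expect the real care to be needed.
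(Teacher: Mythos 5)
Your decomposition and your handling of the sample-mean term coincide with the paper's proof: both split $\mathbb{E}\left[D_{rs}^{\mathcal{D}}(t)\right]$ into the mean term and the minimum term, and both use Assumption~\ref{as:uni} to get $\mathbb{E}\left[\gamma_i\right] = \sum_{k \in K_{rs}} p_{rs}^{k}(t)C_{rs}^{k}(t)$, so the entire statement reduces to comparing $\mathbb{E}\left[\min_{i \in N_{rs}(t)}\gamma_i\right]$ with $\Pi_{rs}(t)$. That is exactly where your proposal has a genuine gap. Your engine --- Jensen applied to $D_{rs}^{\mathcal{D}}(t)$ as a convex function of the sample vector $(\gamma_i)_i$ --- yields only $\mathbb{E}\left[D_{rs}^{\mathcal{D}}(t)\right] \geq D_{rs}^{\mathcal{D}}\left(\mathbb{E}[\gamma]\right)$, and Jensen gives you no freedom to pick a ``deterministic reference configuration'': the evaluation point is forced to be $\mathbb{E}[\gamma]$. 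Because the $\gamma_i$ are identically distributed under Assumption~\ref{as:uni}, that point is the constant vector with every entry equal to $\sum_{k} p_{rs}^{k}(t)C_{rs}^{k}(t)$, at which the averaging term and the minimum term cancel and the estimator evaluates to zero. Your Jensen bound therefore collapses to $\mathbb{E}\left[D_{rs}^{\mathcal{D}}(t)\right] \geq 0$, which is trivially true and strictly weaker than the claim whenever $D_{rs}^{\mathcal{F}}(t) > 0$. The configuration you hope to recover --- averaging term equal to $\sum_k p_{rs}^k(t) C_{rs}^k(t)$ while the minimum term equals $\Pi_{rs}(t)$ --- would require the coordinates of $\mathbb{E}[\gamma]$ to differ, i.e., some sample to have mean exactly $\Pi_{rs}(t)$, contradicting the sampling model.

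The paper closes this step quite differently: it applies Jensen to the minimum term \emph{alone}, asserting directly (its Equation~\ref{eq:jensen}) that $\mathbb{E}\left[\min_{i \in N_{rs}(t)}\gamma_i\right] \leq \min_{k \in K_{rs}} C_{rs}^k(t)$, and then subtracts; this one inequality is the entire content of its proof, and it is the one ingredient your proposal never establishes. Note also that your fallback plan cannot rescue the argument: a finite-sample order-statistic analysis via Equation~\ref{eq:mintt}, using that each $\gamma_i$ is supported on $[\Pi_{rs}(t),\infty)$, gives $\min_i \gamma_i \geq \Pi_{rs}(t)$ almost surely and hence $\mathbb{E}\left[\min_i \gamma_i\right] \geq \Pi_{rs}(t)$ --- the \emph{reverse} of what the claimed direction needs, and it would deliver $\mathbb{E}\left[D_{rs}^{\mathcal{D}}(t)\right] \leq D_{rs}^{\mathcal{F}}(t)$ instead. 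Your closing instinct that the minimum term is where ``real care is needed'' is well placed: under the support assumption used in the proof of Proposition~\ref{prop:dis1}, the comparison goes the opposite way, so any proof of the stated $\geq$ must rest on a sampling model in which individual $\gamma_i$ can fall below $\Pi_{rs}(t)$ (e.g., through measurement noise), which neither your route nor your fallback supplies.
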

\begin{proof}
	Since the minimum function is convex, we have Equation~\ref{eq:jensen} holds by Jensen's inequality.
	\begin{eqnarray}
	\label{eq:jensen}
	\mathbb{E} \left[\min_{i \in N_{rs}(t)} \gamma_{i} \right] \leq  \min_{k \in K_{rs}}  C_{rs}^k(t)
	\end{eqnarray}
	Further by Proposition~\ref{prop:dis1}, we have $\mathbb{E}_{i}\left[\gamma_{i}\right] = \sum_{k \in K_{rs}}p_{rs}^{k}(t) C_{rs}^{k}(t), \forall i \in N_{rs}(t)$. Then we can get
	\begin{eqnarray}
	\mathbb{E} \left[D_{rs}^{\mathcal{D}} \right] &=& \mathbb{E} \left[ \frac{1}{|N_{rs}(t)|} \sum_{i \in N_{rs}(t)}  \gamma_{i}\right] - \mathbb{E} \left[\min_{i \in N_{rs}(t)} \gamma_{i} \right] \\
	&\geq&\sum_{k \in K_{rs}}p_{rs}^{k}(t) C_{rs}^{k}(t)  - \min_{k \in K_{rs}} C_{rs}^{k}(t)\\
	&=&D_{rs}^{\mathcal{F}}
	\end{eqnarray}
	
\end{proof}

As can be seen from Proposition~\ref{prop:dis1}, the estimator $D_{rs}^{\mathcal{D}}(t)$ approximates $D_{rs}^{\mathcal{F}}(t)$ when $|N_{rs}(t)| \to \infty$. However, there are two issues regarding to the NDL measure estimator $D_{rs}^{\mathcal{D}}(t)$: 1) based on Proposition~\ref{prop:biased}, Estimating NDL may not be accurate when $N_{rs}(t)$ is small. This is oftentimes the case given a very small penetration rate of RVs currently; 2) the trajectory-level (or trip-level) data may reveal personally identifiable information, and thus TNCs will not share the trajectory (or trip-level) data to public sectors. Hence the trajectory-level data is extremely difficult to obtain.

To further address these issues, we present a new method to estimate NDL using zone-to-zone travel time. The new method relies only on the zone-to-zone travel time of RV data, and it potentially requires fewer data since the travel time is estimated by re-using the trajectory data. In addition, the trajectory data is aggregated to provide zone-level travel time information only, which does not contain personally identifiable information.

\subsection{Estimating NDL with zone-to-zone aggregated RV data}
\label{sec:agg}

In this sub-section, we first present a method to compute the zone-to-zone travel time using RV data. An unimode testing is presented to validate the zone-to-zone travel time. We also provide a scheme for TNCs to release RV data without revealing personally identifiable information. Finally the method to estimate NDL using the zone-to-zone travel time will be presented.

\subsubsection{Zone-to-zone travel time}

We first denote $\phi_{i}$ as the set of all tuples such that each tuple represents a pair of intermediate points along the RV trajectory $v_i$.
$\phi_{i}$ can be computed by Equation~\ref{eq:tuples}.

\begin{eqnarray}
\label{eq:tuples}
\phi_{i} = \left\{ (e, e') | 0 \leq e < e' < T_{i} , e \in \mathbb{Z}, e' \in \mathbb{Z}\right\}
\end{eqnarray}
where $\mathbb{Z}$ represents the set of integers. Each tuple $(e,e')$ represents a segment of the whole trajectory of $v_{i}$, and $e, e'$ represent the starting point and ending point of the segment, respectively. We compute the travel time $\kappa_{ij}$ for each segment, as presented in Equation~\ref{eq:kappa}.

\begin{eqnarray}
\label{eq:kappa}
\kappa_{ij} =  t_{i}^{\phi_{i}(j, 1)} - t_{i}^{\phi_{i}(j, 0)}
\end{eqnarray}
where $\phi_{i}(j, 0)$ and $\phi_{i}(j, 1)$ denote the first and second element in the $j$th tuple in set $\phi_{i}$. $\kappa_{ij}$ can be viewed as a new trajectory travel time sample from $l_{i}^{\phi_{i}(j, 0)}$  to $l_{i}^{\phi_{i}(j, 1)}$. We divide the new trajectory travel time into different sets by its origin and destination, as presented in Equation~\ref{eq:sets2}.

\begin{eqnarray}
\label{eq:sets2}
\tilde{N}_{rs}^{i}(t) = \left\{j \in \Phi| t_{i}^{\phi_{i}(j, 0)} = t, \xi\left(l_{i}^{\phi_{i}(j, 0)}\right) = r, \xi\left(l_{i}^{\phi_{i}(j, 1)}\right) = s\right\}
\end{eqnarray}

To simplify the notation, we further denote
\begin{eqnarray}
\tilde{N}_{rs}(t) = \cup_{i \in N} \tilde{N}_{rs}^{i}(t)
\end{eqnarray}
where the trajectories of all orders are used to form a much richer data set $\tilde{N}_{rs}(t)$. Within $\tilde{N}_{rs}(t)$, we can drop the index $i$ in $\kappa_{ij}$ and re-index the $j$ to make $\kappa_{j}$. Now $j$ is the element index in set $\tilde{N}_{rs}(t)$ instead of $\tilde{N}_{rs}^{i}(t)$. $j$ represents a segment of trajectories from all orders in the data set, and thus can be seen irrelevant to a particular order in $\tilde{N}_{rs}(t)$.

We define $\tilde{C}_{rs}(t)$ as the average sample (or observed) travel time for OD $rs$ departing at time $t$. For each $\tilde{N}_{rs}(t)$, $\kappa_{j}, j \in \tilde{N}_{rs}(t)$  can be viewed as a sample of $\tilde{C}_{rs}(t)$. Hence,  $\tilde{C}_{rs}(t)$ can be computed by Equation~\ref{eq:ces}.
\begin{eqnarray}
\label{eq:ces}
\tilde{C}_{rs}(t) = \frac{1}{|\tilde{N}_{rs}(t)|} \sum_{j \in\tilde{N}_{rs}(t)} \kappa_{j}, \forall  rs \in K_q, t\in T
\end{eqnarray}
By using the above-discussed approach, TNCs will not share the trajectory-level information, instead they can simply share statistics related to $\tilde{C}_{rs}(t)$ ({\em e.g.} Uber Movement data). The zone-to-zone travel time $\tilde{C}_{rs}(t)$ is aggregated over the zones, containing neither personally identifiable information nor any trip information. 


\begin{example}[Processing trajectory data]
	\label{ex:1}
	In this example, we will demonstrate how to process the trajectory data to get the zone-to-zone travel time information. The map and two RV trajectories are presented in Figure~\ref{fig:e1}. Each cell in the map represents one traffic analysis zone. We use the x-coordinate and y-coordinate to determine the location of the zone we are referring to, for example $(1,2)$ denotes the zone in which point $(t_{0}^1, l_{0}^1)$ is located. We also assume $t_{0}^2 = t_{1}^0$.
	
	\begin{figure}[h]
		\centering
		\includegraphics[scale = 0.7]{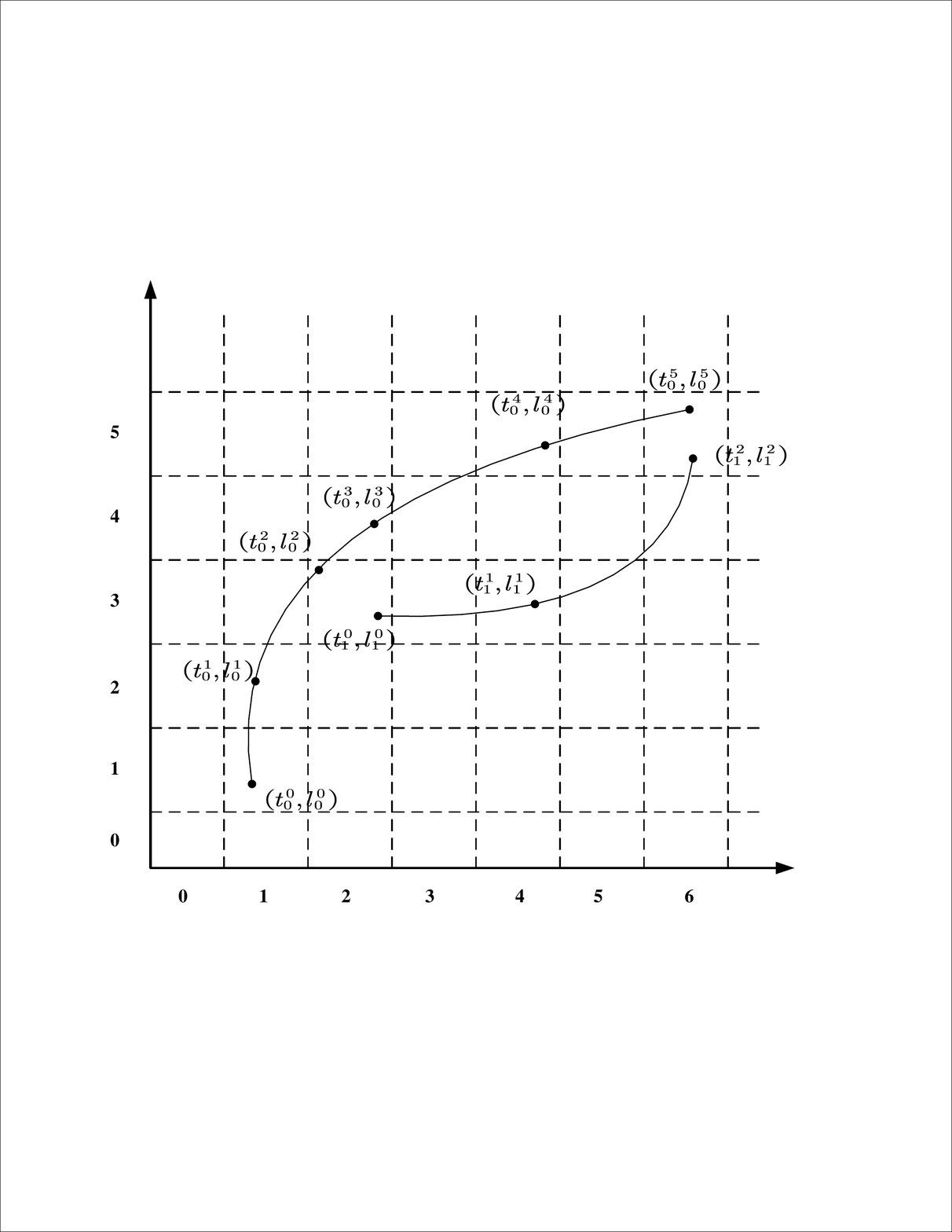}
		\caption{Example of two RV trajectories}
		\label{fig:e1}
	\end{figure}	
	
	For both trajectories, we have
	\begin{eqnarray}
	\phi_{0} &=& \{(0,1), (0,2), (0,3), (0,4), (0, 5), (1,2), (1,3), (1,4), (1, 5), (2,3), (2,4), (2, 5), (3,4), (3, 5), (4,5)\}\\
	\phi_{1} &=& \{(0,1), (0,2), (1,2)\}
	\end{eqnarray}
	
	Then the segmented travel time is computed as follows:
	\begin{eqnarray}
	\kappa_{00} = t_{0}^1 - t_{0}^0\quad \kappa_{01} = t_{0}^2 - t_{0}^1\quad\kappa_{02} = t_{0}^3 - t_{0}^0\quad\kappa_{03} = t_{0}^4 - t_{0}^0\quad\kappa_{04} = t_{0}^5 - t_{0}^0\\
	\kappa_{06} = t_{0}^2 - t_{0}^1\quad \kappa_{07} = t_{0}^3 - t_{0}^1\quad\kappa_{08} = t_{0}^4 - t_{0}^1\quad\kappa_{09} = t_{0}^5 - t_{0}^1\quad\kappa_{0(10)} = t_{0}^3 - t_{0}^2\\
	\kappa_{0(11)} = t_{0}^4 - t_{0}^2\quad \kappa_{0(12)} = t_{0}^5 - t_{0}^2\quad\kappa_{0(13)} = t_{0}^4 - t_{0}^3\quad\kappa_{0(14)} = t_{0}^5 - t_{0}^3\quad\kappa_{0(15)} = t_{0}^5 - t_{0}^4\\
	\kappa_{1(15)} = t_{1}^1 - t_{1}^0\quad \kappa_{1(16)} = t_{1}^2 - t_{1}^0\quad\kappa_{1(17)} = t_{1}^2 - t_{1}^1
	\end{eqnarray}
	
	We then omit index $i$ in the $\kappa_{ij}$ and re-index to get $\kappa_{0}, \cdots, \kappa_{j}, \cdots, \kappa_{17}$. Then the index set of each OD pair $\tilde{N}_{rs}(t)$ is computed as follows:
	\begin{eqnarray}
	\tilde{N}_{(1,1)  (1,2)}(t_{0}^0) &=& \{0\}\\
	\tilde{N}_{(1,1)  (2,3)}(t_{0}^0) &=& \{1\}\\
	&\vdots& \nonumber\\
	\tilde{N}_{(2,3)  (6,5)}(t_{0}^2) &=& \{11, 16\}\\
	&\vdots&\nonumber
	\end{eqnarray}
	
	Finally, the segmented travel time $\kappa_{j}$ can be used to estimate  $\tilde{C}_{rs} (t)$ between any OD pair. 
	
\end{example}


\subsubsection{Unimodality test}
When converting the trajectory data to zone-to-zone travel time data, we divide the new trajectory travel time $\kappa_j$ into different sets $\tilde{N}_{rs}(t)$ by its origin and destination, while the travel time within the origin and destination zones is ignored, as demonstrated in Example~\ref{ex:1}.
In fact, the travel time within the zone can be very large if the traffic analysis zones are not partitioned properly. For example, if there is a highway and a local road in region $A$. The highway passes by region $A$ and there is no exiting ramp connecting the highway and region A. Hence vehicles on the highway have to drive out of region $A$ to get off the highway and then come back to cell $A$ through local roads. In this case, the travel time of RV within cell $A$ can not be ignored. We still use Example~\ref{ex:1} as an example, if $l_2^0$ is on the highway while $l_1^0$ is on the local road (suppose cell $A$ is zone $(2,3)$). Based on the discussions above, $\kappa_{11}$ and $\kappa_{16}$ can be very different. Therefore the estimation of of $C_{(2,3)  (6,5)}(t_0^2)$ using $\kappa_{11}$ and $\kappa_{16}$ is not accurate.

We observe that the distribution of travel time from one zone to other zones usually contains multiple modes when the network topology or road properties are substantially heterogeneous for that zone. To ensure the estimation accuracy of the zone-to-zone travel time, we adopt a unimodality testing method called dip test \citep{hartigan1985dip}. The dip test is run for $\kappa_j, j\in \tilde{N}_{rs}(t)$ to check if the travel time is uni-modal. 
If the dip test fails, we would need to further partition the cell into smaller zones and re-compute the zone-to-zone travel time for the smaller zones. An example of the dip test and region segmentation is further provided in Example~\ref{ex:2}.

\begin{example}
	\label{ex:2}
	In this example, we describe how the dip test is conducted on a simplified network. We consider a network with one OD pair, as presented in the upper part of Figure~\ref{fig:ex2}. There is a highway and a local road connecting the origin $r$ and destination $s$.
	\begin{figure}[h]
		\centering
		\includegraphics[scale = 0.5]{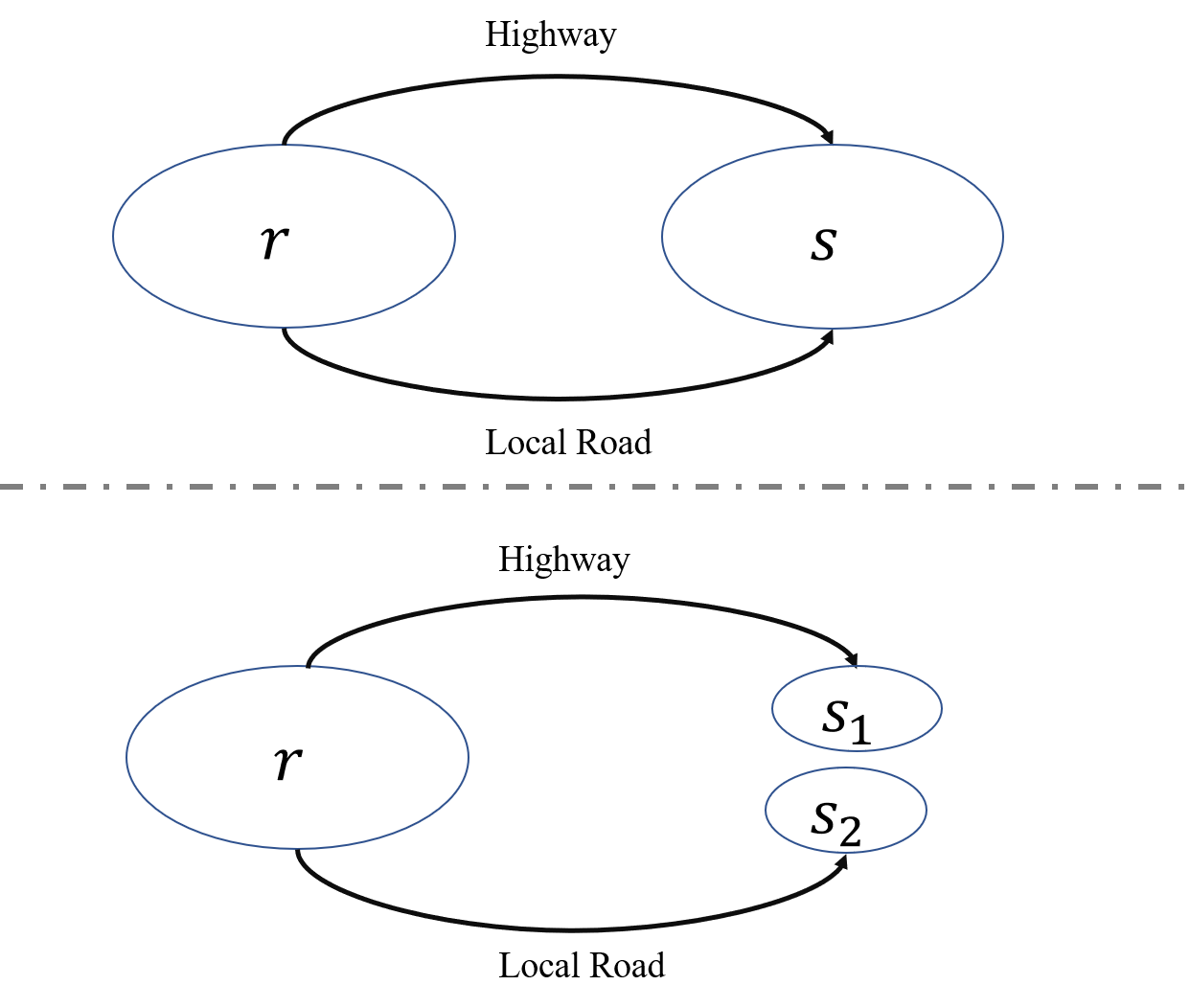}
		\caption{An example of dip test and zone segmentation}
		\label{fig:ex2}
	\end{figure}
	
	Suppose we have ten RV trajectory data, five of them indicate the travel time between $rs$ is $10$ minutes, while the other five trajectories show a $5$ minutes travel time. To be precise, we have $|\tilde{N}_{rs}(t)| = 10$ for a certain $rst$, and $\kappa_j = 5,5,5,5,5,10,10,10,10,10$, respectively. We ran the dip-test and found the p-value is zero, which indicates the travel time is clearly not unimodal.

	The difference in the travel time is because five drivers took the highway and the other five drivers took the local road, To eliminate this discrepancy, we can segment the destination zone $s$ into two zones. One zone $s_1$ contains the end of highway while the other zone $s_2$ contains the end of the local road, as presented in the lower part of Figure~\ref{fig:ex2}. We re-ran the dip-test, the p values are now $0.9029$ for both $s_1$ and $s_2$, which suggest the travel time is unimodal, respectively. By conducting the unimodality test and zone segmentation iteratively, we can eliminate the multi-modal issue appeared in the travel time data.
	
\end{example}

\subsubsection{Measuring the network disequilibrium level}

We present the estimation method of NDL using $\tilde{C}_{rs}(t)$. Before that, we first review the definition of the segmented trajectory $j \in \tilde{N}_{rs}(t)$. Each actual trajectory $i$ corresponds to multiple segmented trajectories, and we can view this as that there are multiple virtual RV drivers on the roads and they depart from $r$ to $s$ at time $t$. Each virtual RV driver is experiencing the travel time $\kappa_j$. The set $\tilde{N}_{rs}(t)$ can be viewed as an enlarged trajectory set driven by the virtual drivers. Based on this observation, we further claim Assumption~\ref{as:nop} is true.

\begin{assumption}
	\label{as:nop}
	The route choice behaviors of the virtual RV drivers are the same as the actual RV drivers.
\end{assumption}

Assumption~\ref{as:nop} implies that the drivers' route choice behavior will not change even if they only drive part of the route. An easy way to think of Assumption~\ref{as:nop} is that when one takes a RV and the route (which is selected by drivers) is displayed on one's phone, if one adds a stopping point along the provided route, Assumption~\ref{as:nop} claims that the updated route will not change.  Assumption~\ref{as:nop} is stronger than Assumption~\ref{as:uni} because it constrains on each segment of a trajectory, while Assumption~\ref{as:uni} only constrains on the whole trajectory.

We assume $\tilde{C}_{rs}(t)$ is known for all $rs \in K_q, t \in T$. 
The flow-scale-free NDL can be reformulated in Equation \ref{eq:origin2}.
\begin{eqnarray}
\label{eq:origin2}
D_{rs}^{\mathcal{F}}(t)
&=& C_{rs}(t) - \min_{k \in K_{rs}} C_{rs}^{k}(t)\nonumber \\
&=& \max_{k \in K_{rs}} \left(C_{rs}(t) - C_{rs}^{k}(t) \right)
\end{eqnarray}
where
\begin{eqnarray}
C_{rs}(t) = \sum_{k \in K_{rs}} p_{rs}^{k}(t) C_{rs}^{k}(t)
\end{eqnarray}

The OD travel time $C_{rs}(t)$ is the average time to complete a trip for all path flow $Q_{rs}(t)$ on OD pair $r,s$ and departing at time $t$. Base on Assumption~\ref{as:nop}, we use zone-to-zone RV data to estimate $C_{rs}(t)$, as presented in Proposition~\ref{prop:odtt}.

\begin{proposition}
	\label{prop:odtt}
	Based on Assumption~\ref{as:nop},  we have
	\begin{eqnarray}
	\tilde{C}_{rs} (t) \xrightarrow{P} C_{rs}(t)
	\end{eqnarray}
	when $|\tilde{N}_{rs}(t)| \to \infty$.
\end{proposition}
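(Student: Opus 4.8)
The plan is to follow the two-step template used in the first half of the proof of Proposition~\ref{prop:dis1}: first pin down the common expectation of a single sample via Assumption~\ref{as:nop}, and then push the empirical average to that expectation with a law of large numbers.

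First I would use Assumption~\ref{as:nop} to characterize each segmented travel time $\kappa_j$, $j \in \tilde{N}_{rs}(t)$, as a realization produced by a virtual driver who departs from $r$ to $s$ at $t$, chooses path $k$ with probability $p_{rs}^{k}(t)$, and experiences travel time $C_{rs}^{k}(t)$ on that path. Since every virtual driver draws from the same route-choice distribution, the $\kappa_j$ are identically distributed, and I would establish
\begin{equation}
\mathbb{E}\left[\kappa_j\right] = \sum_{k \in K_{rs}} p_{rs}^{k}(t) C_{rs}^{k}(t) = C_{rs}(t), \quad \forall j \in \tilde{N}_{rs}(t),
\end{equation}
which is exactly the definition of $C_{rs}(t)$ stated just before the proposition. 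Because path travel times are finite, each $\kappa_j$ is bounded and hence has finite variance. I would then invoke the weak law of large numbers on the empirical mean defining $\tilde{C}_{rs}(t)$ in Equation~\ref{eq:ces}, so that as $|\tilde{N}_{rs}(t)| \to \infty$,
\begin{equation}
\tilde{C}_{rs}(t) = \frac{1}{|\tilde{N}_{rs}(t)|} \sum_{j \in \tilde{N}_{rs}(t)} \kappa_j \xrightarrow{P} \mathbb{E}\left[\kappa_j\right] = C_{rs}(t),
\end{equation}
which is the claim.

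The hard part will be justifying the law of large numbers, because the $\kappa_j$ are not mutually independent. By construction $\tilde{N}_{rs}(t) = \cup_{i \in N} \tilde{N}_{rs}^{i}(t)$ pools overlapping, nested sub-segments of the same physical order $v_i$ (in Example~\ref{ex:1}, several tuples share a common starting message), so segments drawn from one trajectory share a route choice and overlapping road stretches and are therefore correlated; the clean i.i.d. form of the LLN does not apply verbatim. I would close this gap in one of two ways. The honest route is to observe that, for a fixed $(r,s,t)$, only a bounded number of sub-segments of any single order can land in $\tilde{N}_{rs}(t)$, since the starting message is essentially pinned down by the departure interval $t$ and the origin zone $r$; the correlations are therefore local, and a law of large numbers for identically distributed sequences with uniformly bounded covariances (or a block argument that groups each order into a single block) delivers the convergence. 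The alternative, which Assumption~\ref{as:nop} appears designed to license, is to treat the virtual drivers as an i.i.d. sample outright, so that the standard LLN applies and the dependence is abstracted away by the modeling assumption. I would state explicitly which justification I rely on, since it is the independence question, rather than the expectation computation, where the real content of the proof lies.
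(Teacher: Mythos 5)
Your proof follows essentially the same route as the paper's: use Assumption~\ref{as:nop} to establish $\mathbb{E}\left[\kappa_j\right] = \sum_{k \in K_{rs}} p_{rs}^{k}(t) C_{rs}^{k}(t) = C_{rs}(t)$ for every $j \in \tilde{N}_{rs}(t)$, and then apply the law of large numbers to the empirical mean in Equation~\ref{eq:ces}. Your additional discussion of the dependence among the $\kappa_j$ (overlapping sub-segments pooled from the same order) is actually more careful than the paper itself, which invokes the LLN without acknowledging that the pooled samples are not independent.
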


\begin{proof}
	We compute the expectation of $\kappa_{j}$ in each subset $\tilde{N}_{rs}(t)$ based on Assumption~\ref{as:nop},
	\begin{eqnarray}
	\mathbb{E}_{j}\left[\kappa_{j}\right] = \sum_{k \in K_{rs}}p_{rs}^{k}(t) C_{rs}^{k}(t), \forall j \in \tilde{N}_{rs}(t)
	\end{eqnarray}
	By Law of Large Numbers (LLN), we have
	\begin{eqnarray}
	\tilde{C}_{rs} (t) = \frac{1}{|\tilde{N}_{rs}(t)|} \sum_{j \in \tilde{N}_{rs}(t)}  \kappa_{j} \xrightarrow{P} \sum_{k \in K_{rs}} p_{rs}^{k}(t) C_{rs}^{k}(t), \forall j \in \tilde{N}_{rs}(t) = C_{rs}(t)
	\end{eqnarray}
	when $|\tilde{N}_{rs}(t)| \to \infty$.
\end{proof}


Now we are ready to present a new estimator for NDL using zone-to-zone travel time data,
\begin{eqnarray}
\label{eq:agg}
D_{rs}^{\mathcal{A}}(t) = \max_{\tau \in \mathbb{T}}\left(\tilde{C}_{rs}(t) - \tilde{C}_{r \tau s}(t)\right)
\end{eqnarray}
where $\tau$ is a set of connecting (intermediate) points that connect path flows from one OD pair to another, and $\mathbb{T}$ is the set of all possible $\tau$. $\tilde{C}_{r \tau s}(t)$ may be virtual in the sense that there may not exist such an RV which complete the trip following $r$, $\tau$ to $s$ (denoted by $\tilde{C}_{r \tau s}(t)$), but the travel time of this virtual trip can be estimated through the ``relay'' of multiple trips that are actually completed by RVs (in this case, a trip from $r$ to $\tau$, and another trip from $\tau$ to $s$). We define the average travel time among all paths from $r$ to
$s$ through $\tau$ as $\tilde{C}_{r\tau s}(t)$ in Equation~\ref{eq:alter}.

\begin{eqnarray}
\label{eq:alter}
\tilde{C}_{r\tau s}(t) =  \tilde{C}_{r\tau[0]}(t) + \sum_{u = 1}^{|\tau|-1} \tilde{C}_{\tau[u-1]\tau[u]}\left(t^{u}\right)  + \tilde{C}_{\tau[|\tau|]s}\left(t^{|\tau|}\right)
\end{eqnarray}
where $\tau[0]$ is the first element (i.e., connecting or intermediate point) in $\tau$, $\tau[1]$ is the second, etc. Note the travel time of alternative path $C_{r \tau s}(t)$ is defined in the same manner with Equation~\ref{eq:alter}. 
To simplify the computation, we only consider the alternative paths with one connecting point.  When $\tau$ only has one element, the scalar $\tau = \tau[0]$ is used to simplify the notation. Then the travel time of the alternative path $\tilde{C}_{r\tau s}(t)$ with one connecting point is presented in Equation~\ref{eq:alter2}.

\begin{eqnarray}
\label{eq:alter2}
\tilde{C}_{r\tau s}(t) =  \tilde{C}_{r\tau} \left(t\right) + \tilde{C}_{\tau s}\left(t + \tilde{C}_{r\tau}\left(t\right)  \right)
\end{eqnarray}

\begin{example}[Alternative path]
	In this example, we illustrate the idea of alternative path and the definition of $C_{r\tau s}(t)$. The layout of the map and notations are the same as Example~\ref{ex:1}. Figure~\ref{fig:e2} presents the trajectories of three RV trips, and we assume $t = t_{0}^0 = t_{1}^0$ and $t_{2}^0 = t_{1}^2$.
	
	\begin{figure}[h]
		\centering
		\includegraphics[scale = 0.7]{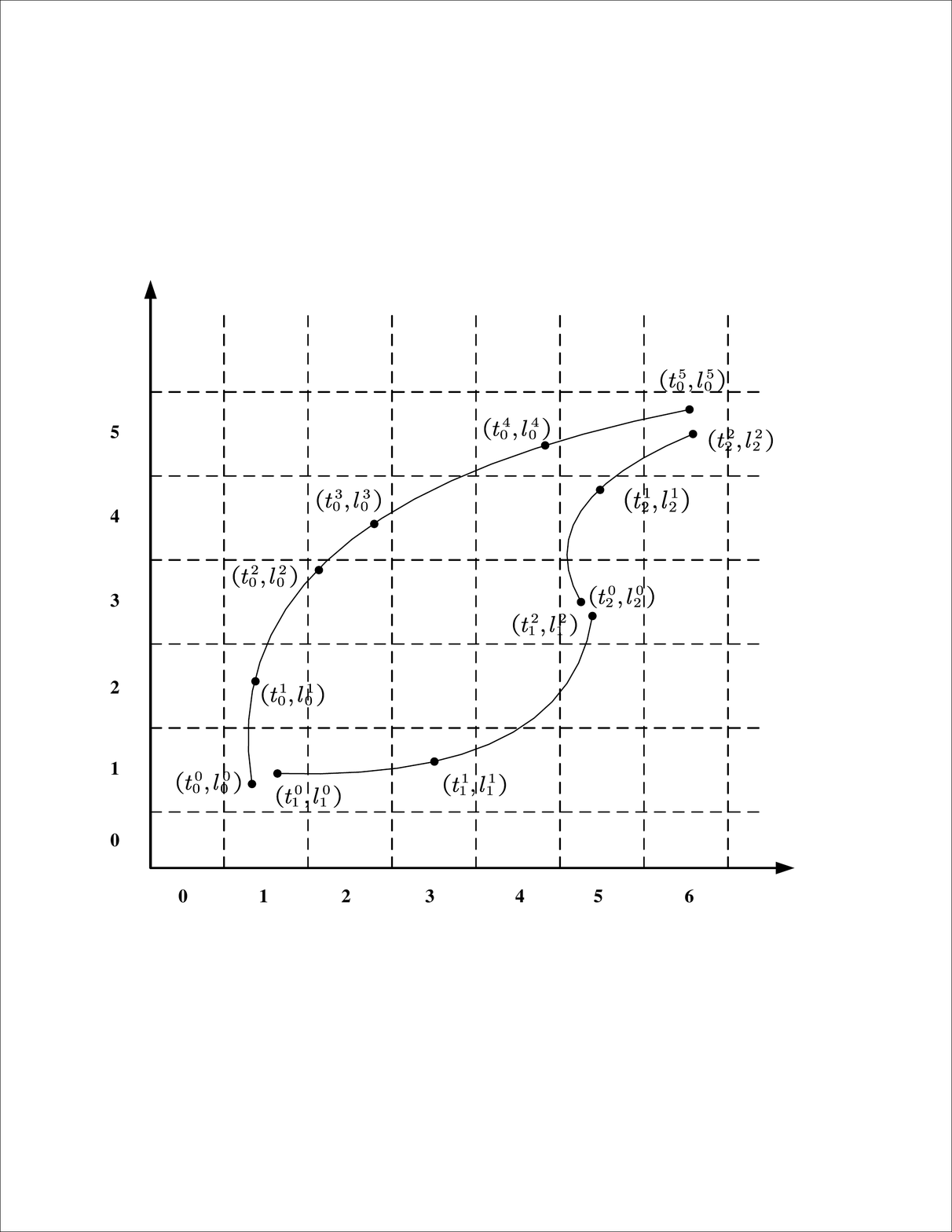}
		\caption{Example of three RV trajectories}
		\label{fig:e2}
	\end{figure}
	
	One can clearly see $t_{0}^5 - t_{0}^0$ is a sample of $C_{(1,1) (6,5)}(t)$. The trajectories $1$ and $2$ form an alternative path from origin $(1,1)$ to destination $(6,5)$, so $t_{2}^2 - t_{0}^0$ is a sample of $C_{(1,1) (5,3) (6,5) }(t)$. We note here $\tau = (5,3)$ (or $\tau = [(5,3)]$ if we do not use a simplified form), meaning zone $(5,3)$ is a connecting point.
\end{example}

We further prove the asymptotic property of the estimated NDL by zone-to-zone travel time in Proposition~\ref{prop:agg2}.

\begin{proposition}
	\label{prop:agg2}
	When $|\tilde{N}_{rs}(t)| \to \infty$ and $\mathbb{T}$ enumerates all the possible selections of $\tau$, we have
	\begin{eqnarray}
	D_{rs}^{\mathcal{A}}(t)\xrightarrow{P} D_{rs}^{\mathcal{F}}(t)
	\end{eqnarray}
\end{proposition}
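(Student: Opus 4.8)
The plan is to decouple Proposition~\ref{prop:agg2} into a deterministic structural identity and a stochastic convergence step, then glue them together with the continuous mapping theorem. Observe first that, by Equation~\ref{eq:origin2}, the target equals $D_{rs}^{\mathcal{F}}(t) = C_{rs}(t) - \Pi_{rs}(t)$ with $\Pi_{rs}(t) = \min_{k \in K_{rs}} C_{rs}^{k}(t)$, whereas the estimator in Equation~\ref{eq:agg} equals $\tilde{C}_{rs}(t) - \min_{\tau \in \mathbb{T}} \tilde{C}_{r\tau s}(t)$. Thus it suffices to establish two facts: that each sampled relay cost converges, $\tilde{C}_{r\tau s}(t) \xrightarrow{P} C_{r\tau s}(t)$, and that the population-level identity $\min_{\tau \in \mathbb{T}} C_{r\tau s}(t) = \Pi_{rs}(t)$ holds. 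Granting these, the conclusion follows since $\tilde{C}_{rs}(t) \xrightarrow{P} C_{rs}(t)$ by Proposition~\ref{prop:odtt} and the minimum over a finite index set is a continuous map.

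For the convergence step I would fix a connecting-point sequence $\tau$ and expand $\tilde{C}_{r\tau s}(t)$ via Equation~\ref{eq:alter} as a finite sum of zone-to-zone estimates along the relay, each of which converges in probability to its population counterpart by Proposition~\ref{prop:odtt} applied to the relevant zone pair. The one nuance is that, as Equation~\ref{eq:alter2} shows, the departure-time argument of every segment after the first is itself a random, converging quantity (the accumulated estimated travel time), so the continuous mapping theorem is being applied to a composition whose inner argument is random; I would handle this either by exploiting the discreteness of the time index $t$ (so that the estimated departure time lands in the correct interval with probability tending to one) or by assuming continuity of $C_{\cdot}(\cdot)$ in its time argument. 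Summing the converging segments and then minimizing over the finite set $\mathbb{T}$ yields $\min_{\tau} \tilde{C}_{r\tau s}(t) \xrightarrow{P} \min_{\tau} C_{r\tau s}(t)$.

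The identity $\min_{\tau} C_{r\tau s}(t) = \Pi_{rs}(t)$ is where the real work lies, and I expect it to be the main obstacle. I would prove the two inequalities separately. For $\min_{\tau} C_{r\tau s}(t) \geq \Pi_{rs}(t)$, note that each relay cost is a sum of segment-averages which, under the product route-choice measure, equals the average over concatenated genuine paths $r \to \tau \to s$; since every such concatenated path has cost at least $\Pi_{rs}(t)$, any convex combination of them is as well. For the reverse inequality I would exhibit a particular $\tau$ that attains $\Pi_{rs}(t)$: taking the connecting points to be the successive intermediate zones traversed by a minimal-cost path, and refining them enough that each consecutive pair of connecting points admits essentially a unique sub-path, collapses the averaging in each segment of Equation~\ref{eq:alter} to a single deterministic sub-path time, so the relay sum telescopes to exactly $\Pi_{rs}(t)$. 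This is precisely what the zone-segmentation and unimodality procedure is engineered to guarantee, so I would lean on it to argue that a sufficiently fine $\tau$ isolates the minimal path.

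Two points warrant explicit care. First, the reconstruction-of-all-paths argument requires $\mathbb{T}$ to be rich enough (all connecting-point selections) yet effectively finite in a finite network, so that the finite-max continuous mapping argument is legitimate; if $\mathbb{T}$ is allowed to contain arbitrarily long sequences, I would instead note that only finitely many distinct relay costs arise in a finite network and work with that finite image. Second, the temporal composition in Equation~\ref{eq:alter2} couples the segments, so I must invoke joint rather than merely marginal convergence; since coordinatewise convergence in probability delivers joint convergence and the relay map is continuous, the continuous mapping theorem closes the argument and Proposition~\ref{prop:agg2} follows.
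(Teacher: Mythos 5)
Your proposal follows essentially the same route as the paper's own proof: invoke Proposition~\ref{prop:odtt} to get $\tilde{C}_{rs}(t) \xrightarrow{P} C_{rs}(t)$ and $\tilde{C}_{r\tau s}(t) \xrightarrow{P} C_{r\tau s}(t)$, establish the identity $\min_{\tau \in \mathbb{T}} C_{r\tau s}(t) = \Pi_{rs}(t)$ by taking $\tau$ to be the list of intermediate points of a minimal path, and close with the continuous mapping theorem. The paper's version is considerably terser---it asserts the key identity in a single line and does not discuss the random departure-time composition in Equation~\ref{eq:alter2}, the effective finiteness of $\mathbb{T}$, or the $\geq$ direction of the identity (i.e., that segment-wise averaging cannot drive a relay cost below $\Pi_{rs}(t)$), all of which you correctly identify as the points where the real care is needed.
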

\begin{proof}
	By Proposition~\ref{prop:odtt}, we have
	\begin{eqnarray}
	\tilde{C}_{rs} (t) &\xrightarrow{P}& C_{rs}(t)\\
	\tilde{C}_{r \tau s} (t) &\xrightarrow{P}& C_{r \tau s}(t)
	\end{eqnarray}
	We also have
	\begin{eqnarray}
	\min_{\tau \in \mathbb{T}} C_{r \tau s}(t) &=&\min_{k \in K_{rs}} C_{rs}^k(t)
	\end{eqnarray}
	by choosing $\tau$ as the list of intermediate points of path $k$.
	Again by continuous mapping theorem, the proposition holds.
\end{proof}

\section{Traffic management through user optimal routing with limited control ability}
\label{sec:routing}
In this section, we present an NDL-based traffic management framework. The framework aims at achieving user optimal routing through minimizing NDLs in the network. 
The mathematical formulation for the user optimal routing problem is presented in Equation~\ref{eq:target}.

\begin{equation}
\label{eq:target}
\begin{array}{rrclclc}
\vspace{5pt}
\displaystyle \min_{ \{F_{rs}^{k}(t)\}_{rskt}} & \multicolumn{4}{l}{\displaystyle  \int_{t \in T}\sum_{rs \in K_q}Q_{rs}(t) C_{rs}(t)dt - \int_{t \in T}\sum_{rs \in K_q}Q_{rs}(t) \Pi_{rs}(t) dt} \\
\textrm{s.t.} & p_{rs}^{k}(t) Q_{rs}(t) & = & F_{rs}^{k}(t) & \forall rs \in K_q, k \in K_{rs}, t \in T\\
& p_{rs}^{k}(t) &=& \Psi_{rs}^{k}(t, C_t)&\forall rs \in K_q, k \in K_{rs}, t \in T\\
& C_{rs}(t) &=& \Omega_{rs}^k(t, F_t) & \forall rs \in K_q,  k \in K_{rs}, t \in T\\
& F_{rs}^k(t) &\geq& 0 & \forall rs \in K_q, k \in K_{rs}, t \in T
\end{array}
\end{equation}
where $\Psi_{rs}^{k}(t, \cdot)$ is the route choice portion function for path $k$ in OD $rs$ departing at time $t$, and $\Omega_{rs}^k(t, \cdot)$ denotes the path cost function for path $k$ in OD $rs$ departing at time $t$. $C_t$ and $F_t$ represent all the cost and path flow information before time $t$, respectively.

Instead of system optimal routing, the proposed traffic management framework aims at the user optimal routing due to the following reasons: 1) {\em Travelers' compliance issue}: Under the system optimal routing, some travelers who are routed to a longer route may not follow the guidance; by contrast, travelers tend to follow the guidance under user optimal routing since the provided route is always the shortest; 2) {\em Fairness issue}: similar to 1), the system optimal routing can be unfair to some travelers \citep{schulz2006efficiency}; 3) {\em Small gap}: the gap between user optimal and system optimal is small in terms of total travel time \citep{roughgarden2002selfish}; 4) {\em Data constraints:} current RV dataset does not support an effective system optimal routing method.

Now we analyze the objective function in Equation \ref{eq:target} by Proposition~\ref{prop:deco}.

\begin{proposition}[Total travel time decomposition]
	\label{prop:deco}
	The total travel time for each OD pair departing at time $t$ can be decomposed into two parts: the merit function-based NDL and the minimum achievable total travel time, as presented in Equation~\ref{eq:decop}.
	\begin{eqnarray}
	\label{eq:decop}
	Q_{rs}(t) C_{rs}(t)   = D_{rs}^{\mathcal{M}}(t)   + Q_{rs}(t) \Pi_{rs}(t)
	\end{eqnarray}
	where $Q_{rs}(t) \Pi_{rs}(t)$ is denoted as the minimum achievable total travel time for OD pair $rs$ departing at time $t$.
\end{proposition}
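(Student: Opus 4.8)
The plan is to establish the identity by a direct algebraic manipulation, starting from the definition of the merit function-based NDL and collecting terms; the only ingredient needed beyond the definitions is flow conservation, namely that the path flows on an OD pair sum to the OD flow.

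First I would expand $D_{rs}^{\mathcal{M}}(t)$ using Equation~\ref{eq:dis11}, separating the two summands and factoring out $\Pi_{rs}(t)$, which does not depend on $k$:
\[
D_{rs}^{\mathcal{M}}(t) = \sum_{k \in K_{rs}} F_{rs}^{k}(t)\, C_{rs}^{k}(t) \;-\; \Pi_{rs}(t) \sum_{k \in K_{rs}} F_{rs}^{k}(t).
\]
Next I would invoke Equation~\ref{eq:route}, $F_{rs}^{k}(t) = p_{rs}^{k}(t) Q_{rs}(t)$, together with the normalization $\sum_{k \in K_{rs}} p_{rs}^{k}(t) = 1$ (the route choice portions form a probability distribution over the paths of OD pair $rs$). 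This gives $\sum_{k \in K_{rs}} F_{rs}^{k}(t) = Q_{rs}(t)$, so the second term collapses to $Q_{rs}(t)\,\Pi_{rs}(t)$. Substituting $F_{rs}^{k}(t) = p_{rs}^{k}(t) Q_{rs}(t)$ into the first term and pulling out $Q_{rs}(t)$ yields $Q_{rs}(t) \sum_{k \in K_{rs}} p_{rs}^{k}(t) C_{rs}^{k}(t) = Q_{rs}(t)\, C_{rs}(t)$ by the definition of the average OD cost $C_{rs}(t)$.

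Combining the two pieces gives $D_{rs}^{\mathcal{M}}(t) = Q_{rs}(t)\, C_{rs}(t) - Q_{rs}(t)\, \Pi_{rs}(t)$, and rearranging this is exactly Equation~\ref{eq:decop}.

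I do not anticipate any genuine obstacle: the statement is an identity that follows from unfolding definitions, so the ``proof'' is essentially the computation sketched above. The one place I would be careful is the implicit normalization $\sum_{k \in K_{rs}} p_{rs}^{k}(t) = 1$, since it is precisely this fact that lets the $\Pi_{rs}(t)$ contribution in $D_{rs}^{\mathcal{M}}(t)$ cancel against the minimum-achievable total travel time $Q_{rs}(t)\,\Pi_{rs}(t)$; rather than leave it tacit I would state it explicitly as flow conservation, $\sum_{k \in K_{rs}} F_{rs}^{k}(t) = Q_{rs}(t)$.
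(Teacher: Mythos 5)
Your proof is correct and is essentially the paper's own argument: the paper simply adds and subtracts $Q_{rs}(t)\Pi_{rs}(t)$ and identifies the bracketed difference $Q_{rs}(t)C_{rs}(t) - Q_{rs}(t)\Pi_{rs}(t)$ with $D_{rs}^{\mathcal{M}}(t)$, which is exactly the computation you carry out explicitly via Equation~\ref{eq:dis11}, Equation~\ref{eq:route}, and the normalization $\sum_{k \in K_{rs}} p_{rs}^{k}(t) = 1$. Your version is slightly more careful in making the flow-conservation step explicit, but the approach is the same.
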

\begin{proof}
	\begin{eqnarray}
	Q_{rs}(t) C_{rs}(t)  &=&
	\left[Q_{rs}(t) C_{rs}(t) - Q_{rs}(t) \Pi_{rs}(t)  \right] +  Q_{rs}(t) \Pi_{rs}(t)  \\
	&=&D_{rs}^{\mathcal{M}}(t)   + Q_{rs}(t) \Pi_{rs}(t)
	\end{eqnarray}
\end{proof}

We substitute Equation~\ref{eq:decop} to Equation~\ref{eq:target}, then the original optimization problem is simplified to Equation~\ref{eq:dis_target} (route choice and travel time constraints are omitted).

\begin{equation}
\label{eq:dis_target}
\begin{array}{rrclclc}
\vspace{5pt}
\displaystyle \min_{\{F_{rs}^{k}(t)\}_{rskt}} & \multicolumn{4}{l}{\displaystyle \int_{t \in T}\sum_{rs \in K_q} D_{rs}^{ \mathcal{M}}(t) } \\
\textrm{s.t.} & \sum_{k \in K_{rs}} F_{rs}^{k}(t)& = & Q_{rs}(t) & \forall t \in T, rs \in K_q\\
& F_{rs}^k(t) &\geq& 0 & \forall t \in T, rs \in K_q, k \in K_{rs}
\end{array}
\end{equation}

Equation \ref{eq:dis_target} claims that user optimal routing problem is equivalent to NDL minimizing problem, hence it can be solved by adaptive ({\em i.e} reactive) user optimal routing using the instantaneous travel information \citep{kuwahara2001dynamic}. However, controlling all the vehicles may not be feasible in real-world networks. Instead, many studies have shown that traffic delay can be considerably reduced by controlling a relatively small portion of vehicles \citep{pi2017stochastic,zhang2018mitigating, sharon2018traffic}. Our approach focuses on solving the user optimal routing with limited control ability using NDL measures of sampled vehicles.

We note formulation~\ref{eq:target} and \ref{eq:dis_target} are conceptual formulations because $\Pi_{rs}(t)$ cannot be obtained in real-world networks. Both formulations are used to exploit the connection between the user optimal routing and NDL minimization problem, and the connection further yields the real-time management framework presented in the following context.







We now propose a real-time traffic management framework to achieve user optimal routing with limited resources. In general, the frameworks is similar to a standard  user optimal routing method that routes travelers to the optimal paths. The major contribution of the proposed framework is that NDL can be estimated in real time and used to prioritize routing vehicles/trips. NDLs provide a routing method that identifies the path flow that is the most distant from user optimum (i.e. user optimal paths), and controlling vehicles/trips from those OD pairs is likely to reduce  traffic congestion the most effectively based on Equation~\ref{eq:dis_target}.

Generally, the algorithm lean to control the flow along an OD pair with highest NDL, such that the limited control resources can be fully utilized. Provided that a routing platform is able to control up to $\alpha$ portion of the total demand,
a traffic management method is presented in Algorithm~\ref{alg:control}.

\begin{algorithm}[H]
	\SetKwInOut{Input}{Input}
	\SetKwInOut{Output}{Output}
	\underline{Routing} $\left(D_{rs}^{\mathcal{F}}(t), F_{rs}^{k}(t), \alpha, \texttt{lag} \right)$\;
	\Input{NDL $D_{rs}^{\mathcal{F}}$, path flow $F_{rs}^{k}(t)$, control ratio $\alpha$, delay of NDL information $\texttt{lag}$}
	\Output{Adjusted path flow $\tilde{F}_{rs}^{k}(t)$}
	Initialize $\tilde{F}_{rs}^{k}(t) = F_{rs}^{k}(t), \forall rs \in K_q, k \in K_{rs}$\;
	Initialize list $L$ as an empty list\;
	Compute controllable total flow $R = \alpha \sum_{rs} \sum_{k} F_{rs}^{k}(t)$\;
	Sort OD pairs by $D_{rs}^{\mathcal{F}}(t-\texttt{lag})$ in descending order and store the list in $L$\;
	\While{$R > 0$ and $|L| > 0$}{
		Pop the OD pair $rs$ with highest  $D_{rs}^{\mathcal{F}}(t-\texttt{lag})$ in $L$\;
		Search for the shortest path $k$ based on $v_i, i \in N_{rs}(t-\texttt{lag})$ for trajectory-level information or $C_{r \tau s}(t-\texttt{lag})$ for zone-to-zone travel time\;
		Route $F_{rs}^{k'}(t), k' \neq k$ to path $k$, $\tilde{F}_{rs}^{k}(t) = \tilde{F}_{rs}^{k}(t) + \sum_{k' \neq k} F_{rs}^{k'}(t)$, $\tilde{F}_{rs}^{k'}(t) = 0$\;
		$R = R - \sum_{k' \neq k} F_{rs}^{k'}(t)$\;
	}
	\caption{NDL-based user optimal routing algorithm}
	\label{alg:control}
\end{algorithm}

In the algorithm, $\alpha$ denotes the maximum ratio of total demand that can be controlled, path flow $F_{rs}^{k}(t)$ is the original path flow without any control, $\tilde{F}_{rs}^{k}(t)$ is the path flow after control (i.e. routing).
Note the vehicles controlled in each time $t$ can be different.
The NDL $D_{rs}^{\mathcal{F}}(t)$ is estimated by RV data based on Equation~\ref{eq:dis2} or Equation~\ref{eq:agg}, and its estimation is usually delayed, meaning that  $D_{rs}^{\mathcal{F}}(t)$ is not immediately available at time $t$. It may be available after $\texttt{lag}$ amount of time intervals. To address this issue, any time series prediction methods can be adopted to obtain a forecast of $D_{rs}^{\mathcal{F}}(t)$ . In the experiments, we will show that simply using $D_{rs}^{\mathcal{F}}(t - \texttt{lag})$ as a prediction for $D_{rs}^{\mathcal{F}}(t)$ can achieve reasonably effective congestion reduction. The term $\texttt{lag}$ refers to the time delay of NDL estimation, and $D_{rs}^{\mathcal{F}}(t - \texttt{lag})$ is the latest NDL available for OD pair $rs$ at time $t$.

\section{Case study I: metropolitan Chengdu area with DiDi chuxing data}
\label{sec:didi}

In this section, we conduct a case study with RV trajectory data in Chengdu, China. Chengdu is a sub-provincial city which serves as the capital of Sichuan Province. As of 2014, its urban population is around $10$ million. Chengdu also serves as a tourism city with multiple large shopping districts, food plazas and cultural attractions \citep{wikichengdu}. Therefore, both commuting traffic and non-recurrent traffic by tourists are critical when analyzing traffic conditions in Chengdu.

The dataset we use includes the trajectory of all RVs operated by DiDi Chuxing from November 1st, 2016 to November 30th, 2016.
Each trajectory record is represented by a sequence of temporal and spatial stamps for a trip order containing a pick-up and a drop-off.
Since the network topology and traffic flow data in Chengdu are not available, the effectiveness of the NDL-based traffic management method cannot be tested in this case study. Hence the objective for first case study is to demonstrate the spatio-temporal patterns of NDL. We will conduct a second case study in Section 7 to exam the traffic management algorithm.

The trajectory data contains duplicate records and outliers, which need to be cleaned before processing. The duplicates are identified by the unique order ID, and the outliers are identified by unrealistic trajectory and travel time. The metropolitan area of Chengdu is segmented into 545,305 cells in the grid, and each cell is a $300$m $\times 300$m square. The origin and destination of each RV trip are matched to the cells by  $\xi(\cdot)$ function defined in Equation~\ref{eq:match}. For example, Figure~\ref{fig:didi_tra} demonstrates how the trajectories look like for one OD pair. Part of the trajectory may be off the road because of measuring errors, but it is clearly that two RVs take the same route and a third RV takes a different route. 

\begin{figure}[h]
	\centering
	\includegraphics[scale = 0.25]{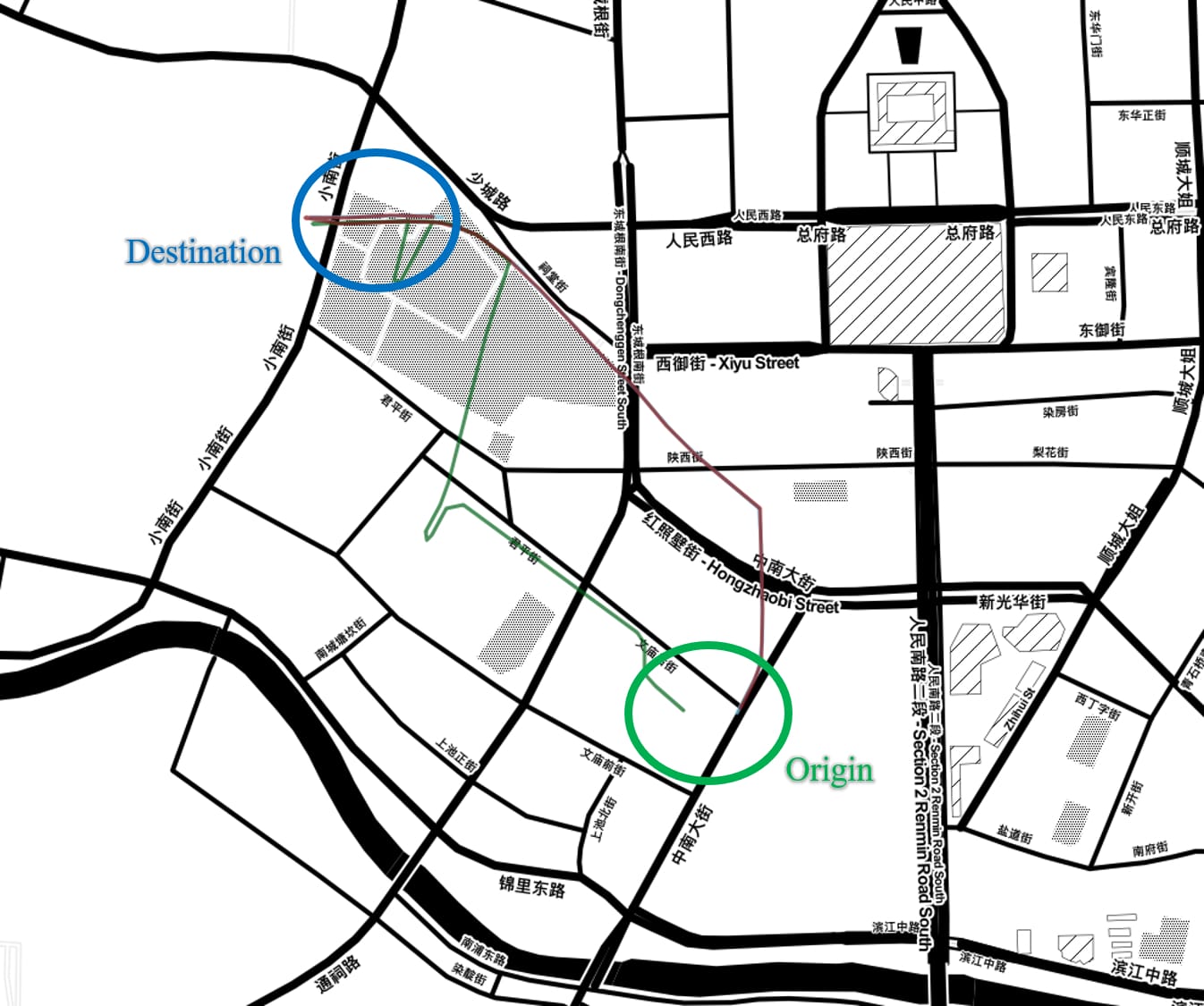}
	\caption{Three trajectories with the same OD in 11:00-12:00, Nov 1st, 2016}
	\label{fig:didi_tra}
\end{figure}
\subsection{NDL aggregated over all OD pairs}
\label{sec:didi_NDL_Ave}
With the cleaned and processed RV data, we compute the NDL measure using Equation~\ref{eq:dis2} for each hour. We first visualize the average NDL aggregated over all OD pairs on each day. In this sub-section, we will use the term ``average NDL'', which is defined as $\frac{1}{|K_q|} \sum_{rs \in K_q} D_{rs}(t)$. Note the ``average NDL'' reflects the trend of NDL in different time intervals, while it does not reflect the average difference between current travel time and the shortest travel time. The latter one is denoted by $\frac{1}{\sum_{rs \in K_q} Q_{rs}(t)} \sum_{rs \in K_q} Q_{rs}(t) D_{rs}(t)$ instead, and it can not be obtained as the demand $Q_{rs}(t)$ is unknown in this study.

\subsubsection{Weekdays v.s. Weekends}


We examine the weekly patterns of NDL for weekdays and weekends. We plot time-of-day NDL for each day (in transparent colors), along with the daily average (in solid colors), in Figure \ref{fig:didi_aveNDL}. The average NDL patterns on weekdays and weekends are quite different. There are three major spikes on weekdays, two corresponding to the morning peaks and one corresponds to the afternoon peak. There is only
one spike on weekends, and the average NDL steadily increases from 4:00am to 14:00pm.

As an approximation to $D_{rs}^{\mathcal{F}}(t)$, the estimated NDL is independent of travel demand. However, the average NDL and total demand follow similar time-of-day patterns in Chengdu, which is not always the case as we will show later in another example. The positive correlation between NDLs and travel demand indicates that high demand level can induce high NDLs in this case. The spikes of NDL can also be explained by the nature of travel demand for the Chengdu network. On weekdays, the two spikes in morning peaks are probably due to the various working start time of travel demand, and the spike in afternoon peaks is due to the evening commute. On weekends, the only spike is largely attributed to traffic demand for weekend entertainments.


\begin{figure}[h!]
	\centering
	\begin{subfigure}[b]{0.475\textwidth}
		\includegraphics[width=\textwidth]{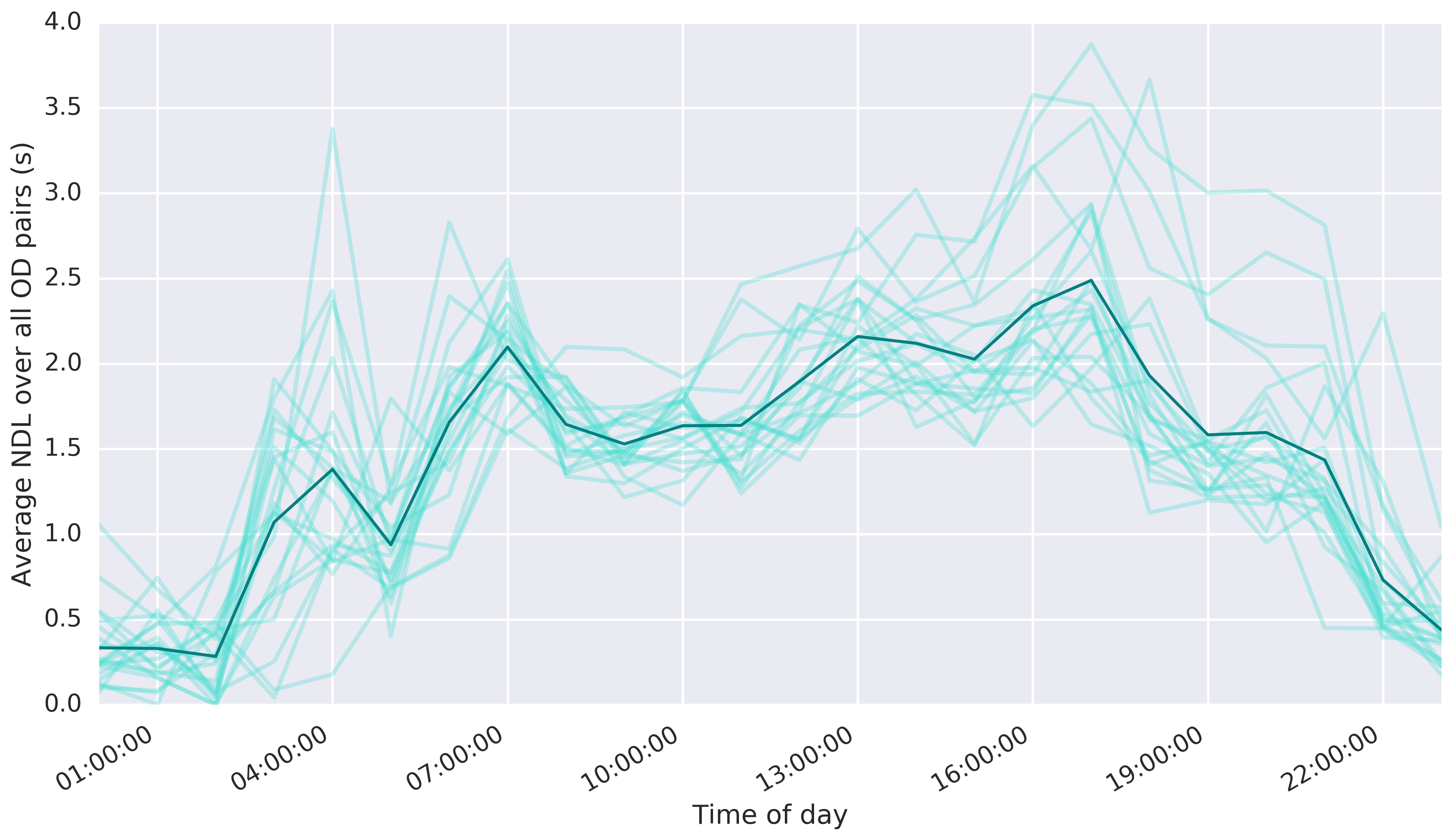}
		\caption{\footnotesize{Weekdays}}
	\end{subfigure}
	\begin{subfigure}[b]{0.475\textwidth}
		\includegraphics[width=\textwidth]{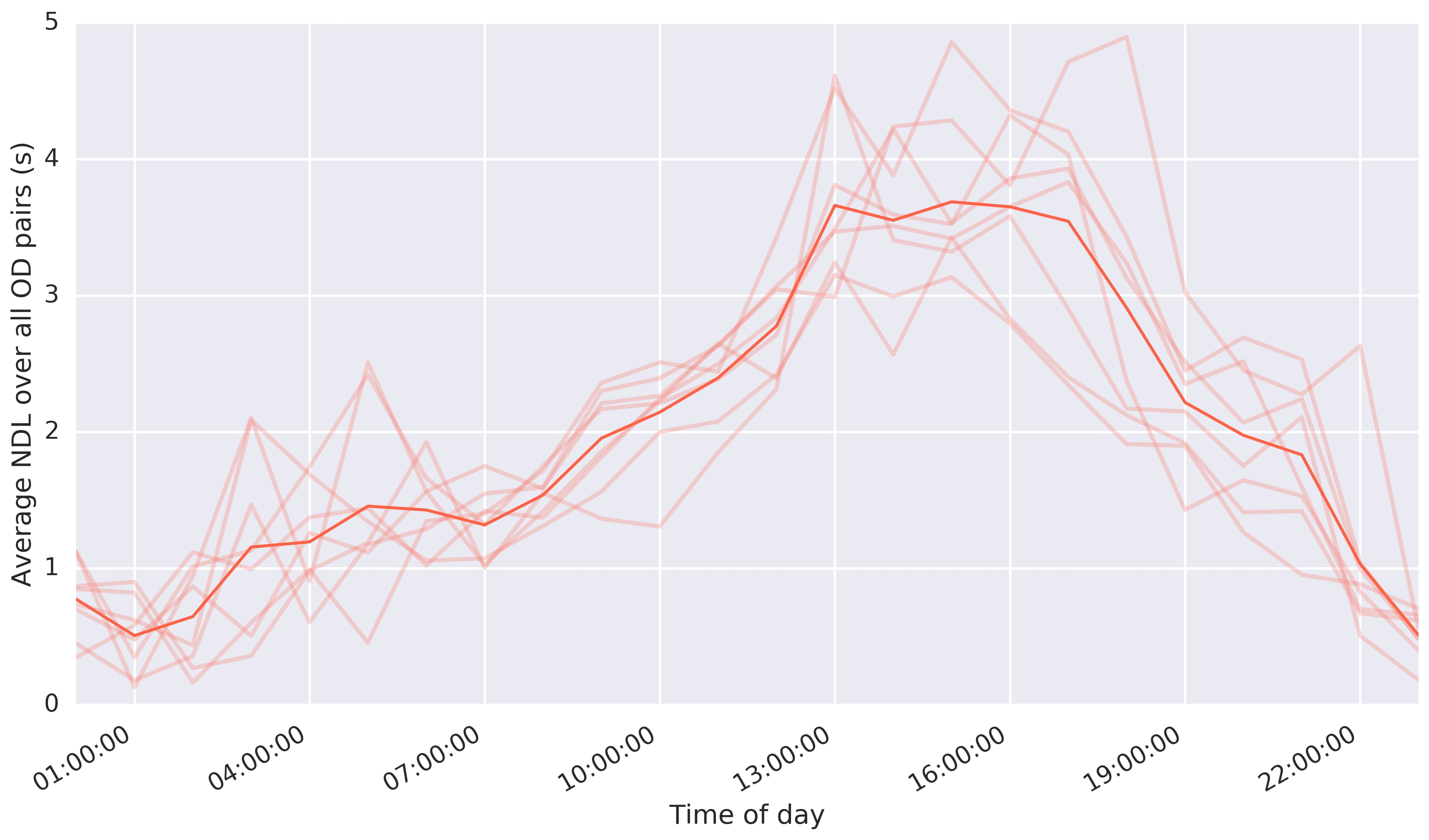}
		\caption{\footnotesize{Weekends}}
	\end{subfigure}
	\caption{\footnotesize{Average NDL by time of day, on weekdays and weekends (solid lines are the average of average NDL taken over all weekdays and weekends, respectively)}}
	\label{fig:didi_aveNDL}
\end{figure}

\subsubsection{Day of week and time of day effects on NDL measure}
\label{sec:didi_time}
We compute the average NDL by hour, averaged over all days in each day of week, as well as the percentage change in average NDL by hour where the base is set as the average of NDL taken
over all days of a week. The results are presented in Figure~\ref{fig:didi_compare}.

\begin{figure}[h!]
	\centering
	\begin{subfigure}[b]{0.7\textwidth}
		\includegraphics[width=\textwidth]{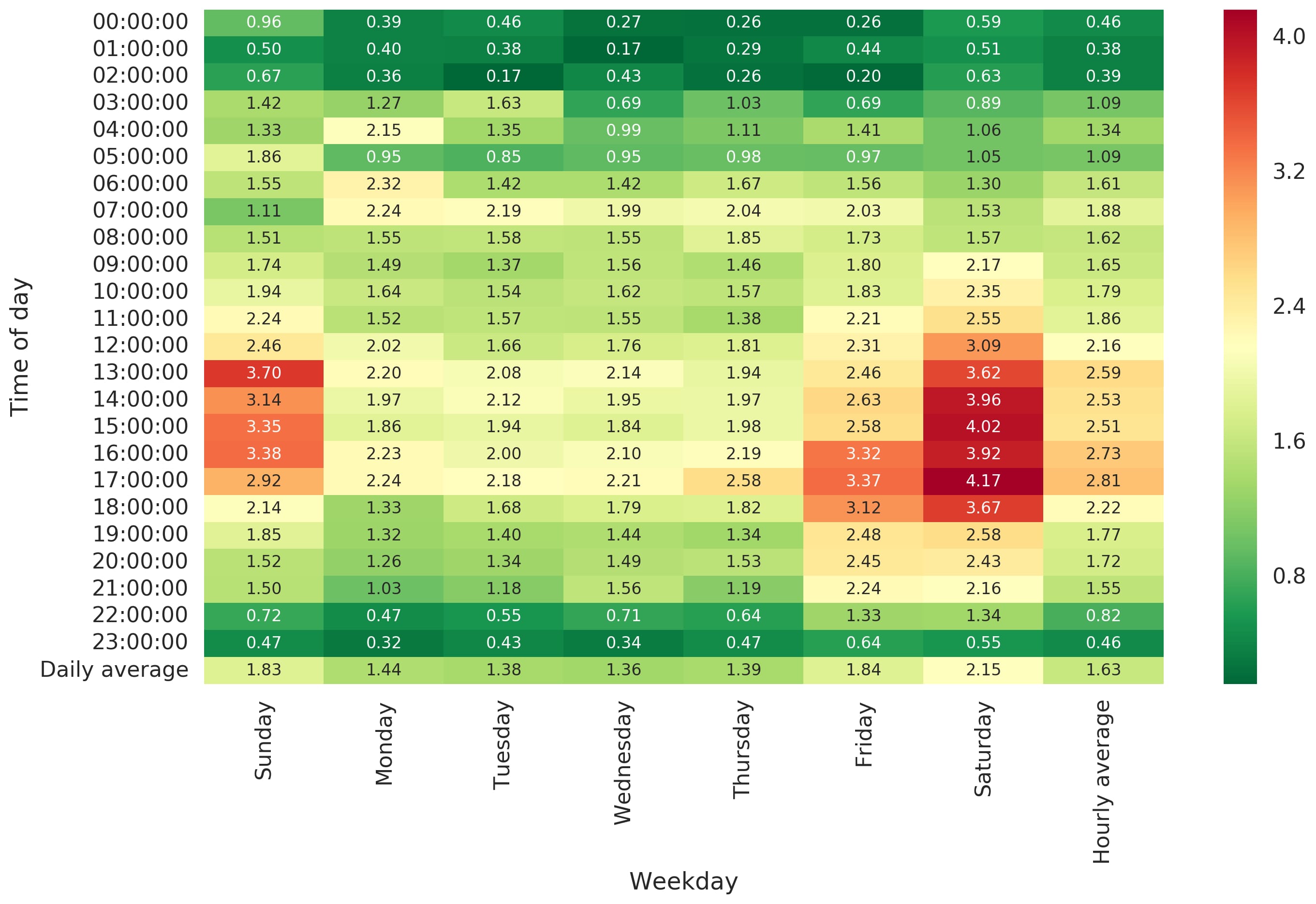}
		\caption{\footnotesize{Average NDL}}
		\label{fig:didi_compare1}
	\end{subfigure}
	\begin{subfigure}[b]{0.7\textwidth}
		\includegraphics[width=\textwidth]{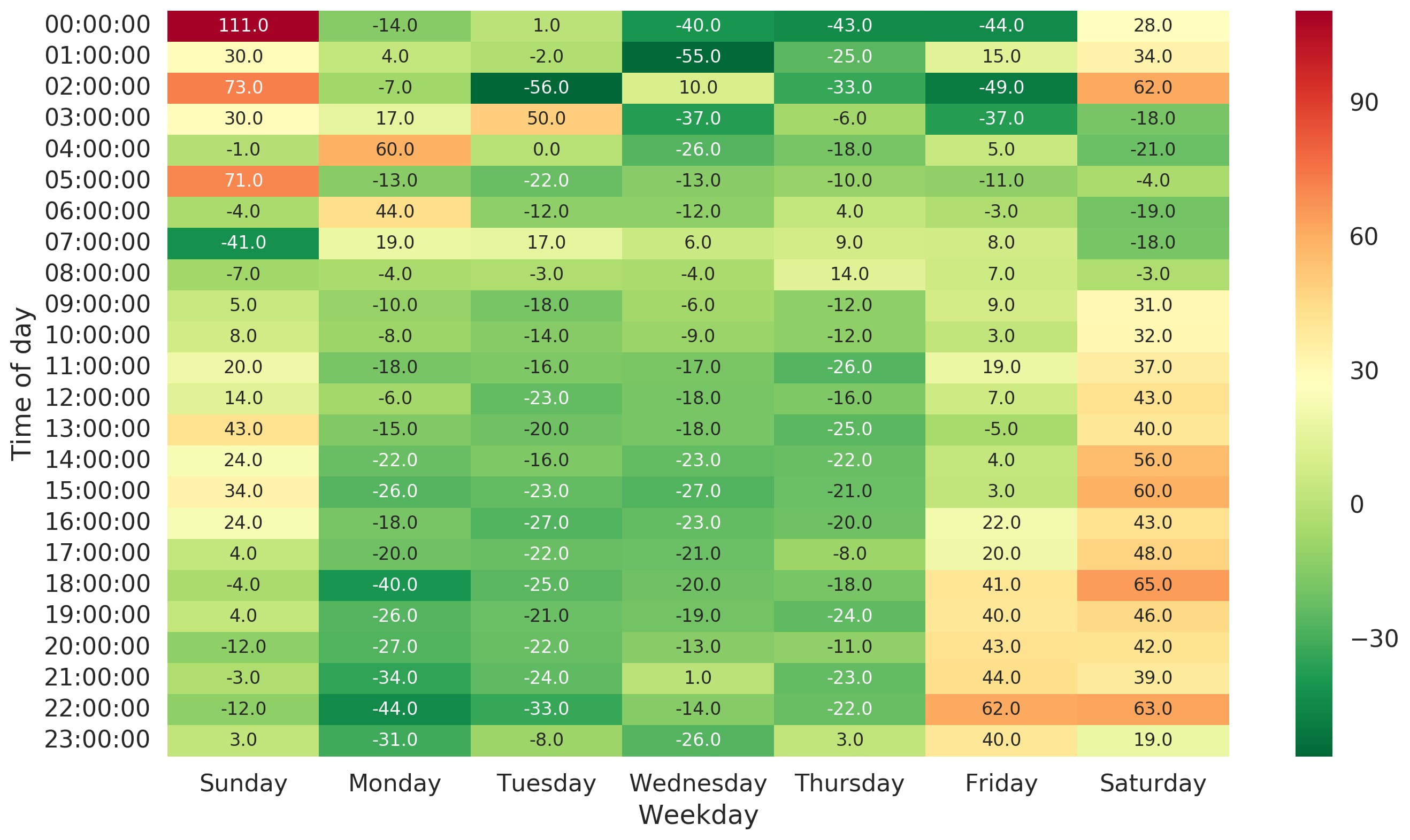}
		\caption{\footnotesize{Percentage change (\%)}}
		\label{fig:didi_compare2}
	\end{subfigure}
	\caption{\footnotesize{Average NDL and its percentage change by hour by day of week, comparing to the daily average of NDL taken over
			all days of week }}
	\label{fig:didi_compare}
\end{figure}

Travelers on the weekdays are largely attributed to recurrent commuters, so the traffic condition is considered to be more stable than the weekends. As a result, NDL is generally lower on weekdays. On the contrary, weekend activities are less likely to follow a fixed pattern for travelers, hence the NDL becomes much higher. In particular, NDL during midnight of weekends is significantly higher than that on weekdays, possibly as a result of weekend midnight activities.

\subsubsection{Spatial pattern of average NDL}
\label{sec:spa_spar}
We compute the NDL between a random subset of OD pairs averaged over each day, and present the results in Figure~\ref{fig:didi_odpair}. To improve the visualization effects, we visualize the NDL lower than $50$s all in the same white color.

\begin{figure}[h]
	\centering
	\includegraphics[scale = 0.1]{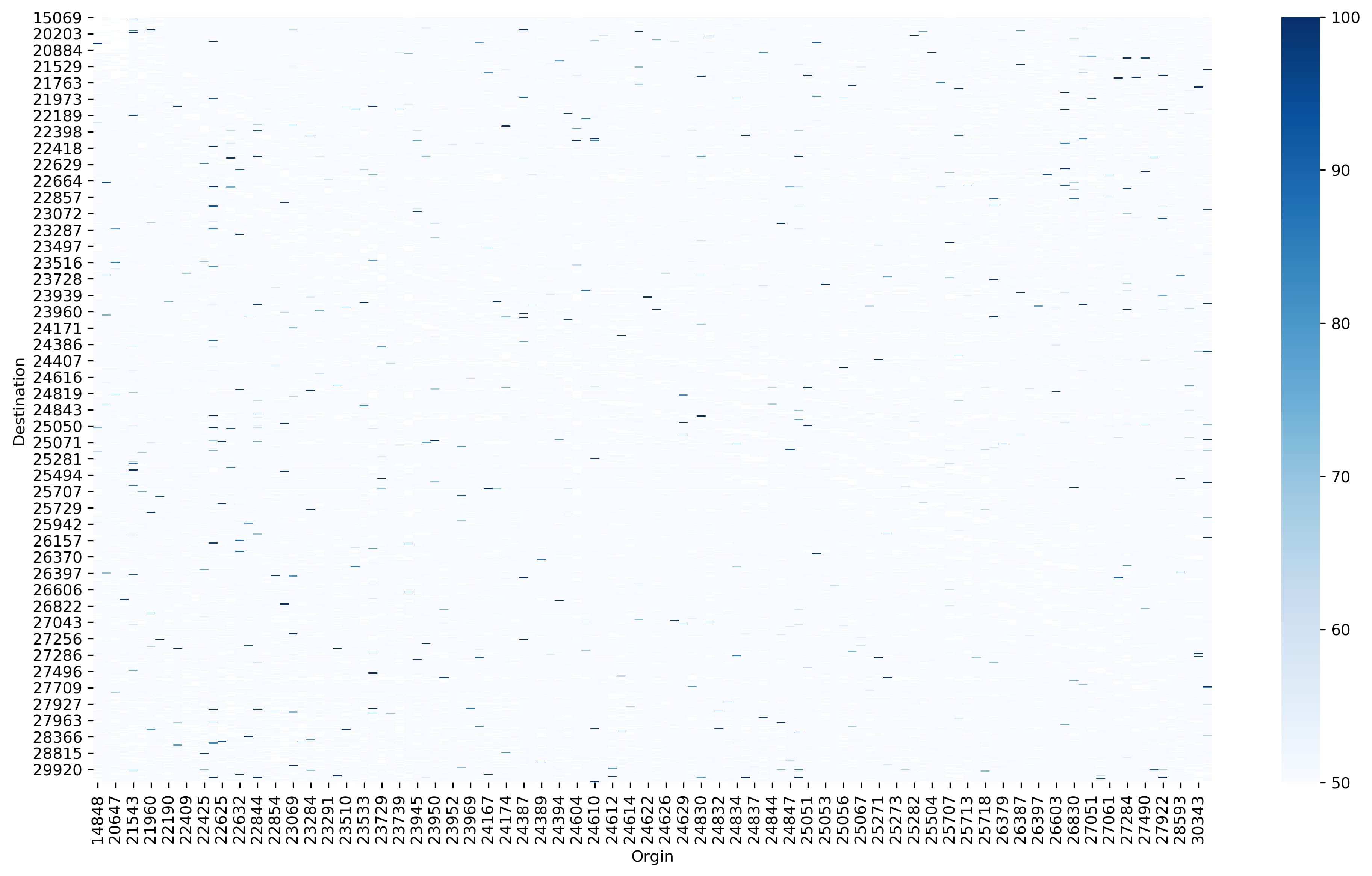}
	\caption{Average NDL between a random subset of OD pairs (unit: seconds)}
	\label{fig:didi_odpair}
\end{figure}

As can be seen from Figure~\ref{fig:didi_odpair}, NDL is sparsely distributed among all OD pairs. The NDL can be as high as $100$s, while most of NDLs are nearly zero.
The sparsity and high variability of NDL between OD pairs are attributed to the demand pattern that are spatially imbalanced and inefficient/inaccurate information available to demand for certain OD pair. Those OD pairs with high NDL can be considered as ``bottlenecks'' of the urban networks, and hence controlling flow along these crucial OD pairs can considerably help traffic management.

\subsection{NDL for individual OD pairs}
Now we examine the spatio-temporal NDL of each OD pair over the course of the study time horizon.

\subsubsection{Time of day effects on NDL measure for single OD pair}
\label{sec:todNDL}
We draw a figure with ($n \times m$) pixels, $n$ is the number of days and $m$ is the number of time intervals on each day.
We set $y$ axis to be the dates in Nov. 2016, and $x$ axis to be the time of day from 00 : 00 to 23 : 59. Each pixel
is color coded to indicate the value of NDL, and color grey means the NDL is not available. This figure demonstrates the daily time-of-day NDL change over the month for each OD pair in high granularity. We randomly selected $12$ OD
pairs and plot them in Figure~\ref{fig:didi_12od}. Generally, NDLs on most of days follow a similar pattern. One can clearly see NDL measures of OD pairs $(22615, 22844)$ and $(22845, 22616)$ are higher than those of OD pairs $(23754, 25514)$ and $(28361, 30362)$. In addition to the NDL spatial sparsity presented in section~\ref{sec:spa_spar}, the occurrence of high NDL for each OD pair is also sparse.

\begin{figure}[h]
	\centering
	\includegraphics[scale = 0.08]{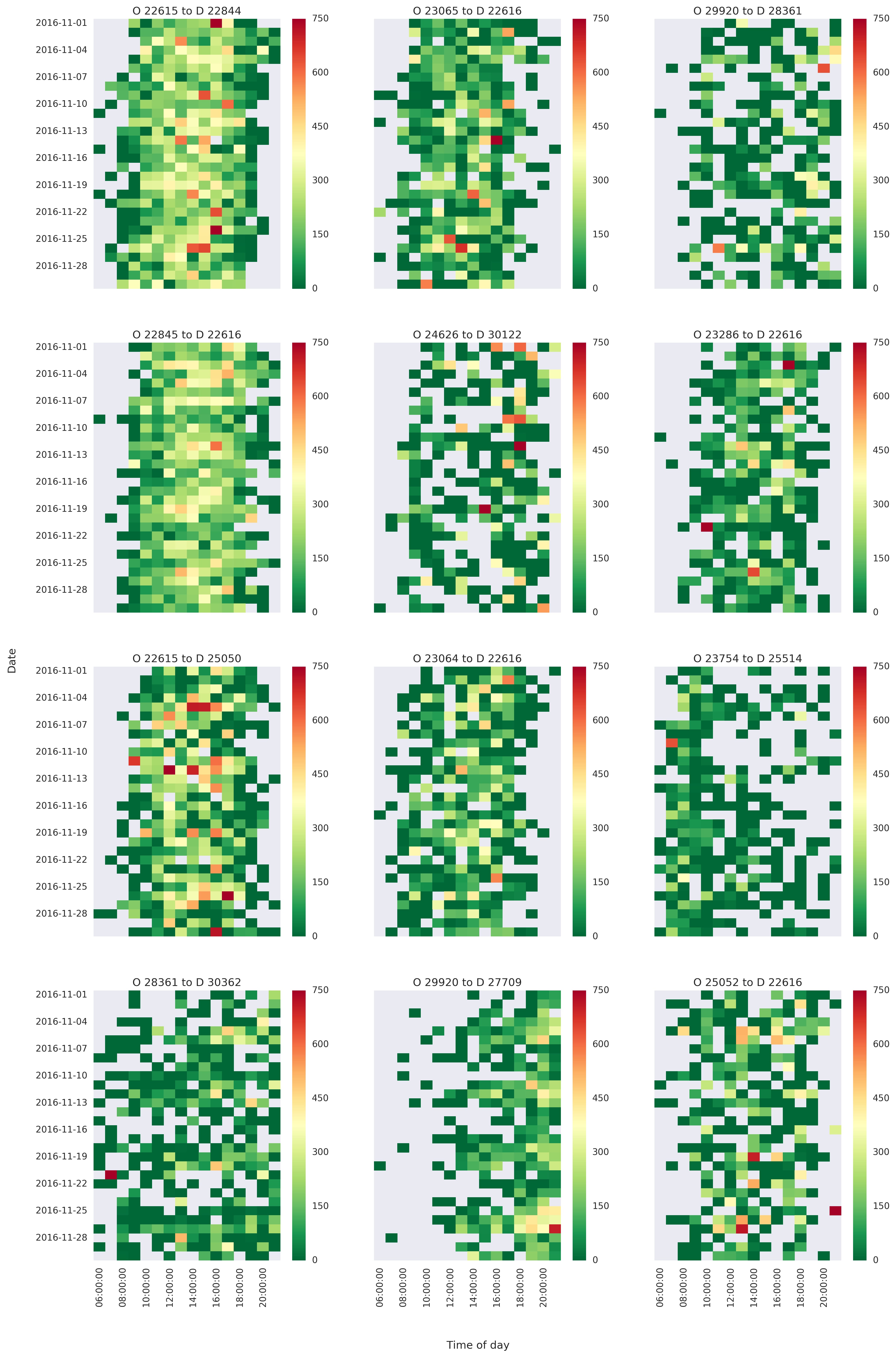}
	\caption{Time-of-day NDL profile for randomly selected $12$ OD pairs}
	\label{fig:didi_12od}
\end{figure}

Another interesting observation is that the NDLs for different ODs are quite random in one specific hour of day, though the general patterns aggregated over all ODs are similar. The randomness of NDL at each particular location and time reflects the nature of dynamic stochastic networks: traffic conditions of each specific time interval and location may be sensitive to demand patterns and non-recurrent factors such as accidents and events. However, NDLs aggregated in the network still exhibit a spatial and time-of-day pattern.

\subsubsection{Origin-base  and destination-based NDL measures}
\label{sec:od}

Before presenting the visualization results, we first overview the layout of Chengdu. Figure~\ref{fig:didi_map} presents an overview of the Chengdu network and some of its points of interests (POIs). 

\begin{figure}[h]
	\centering
	\includegraphics[scale = 0.20]{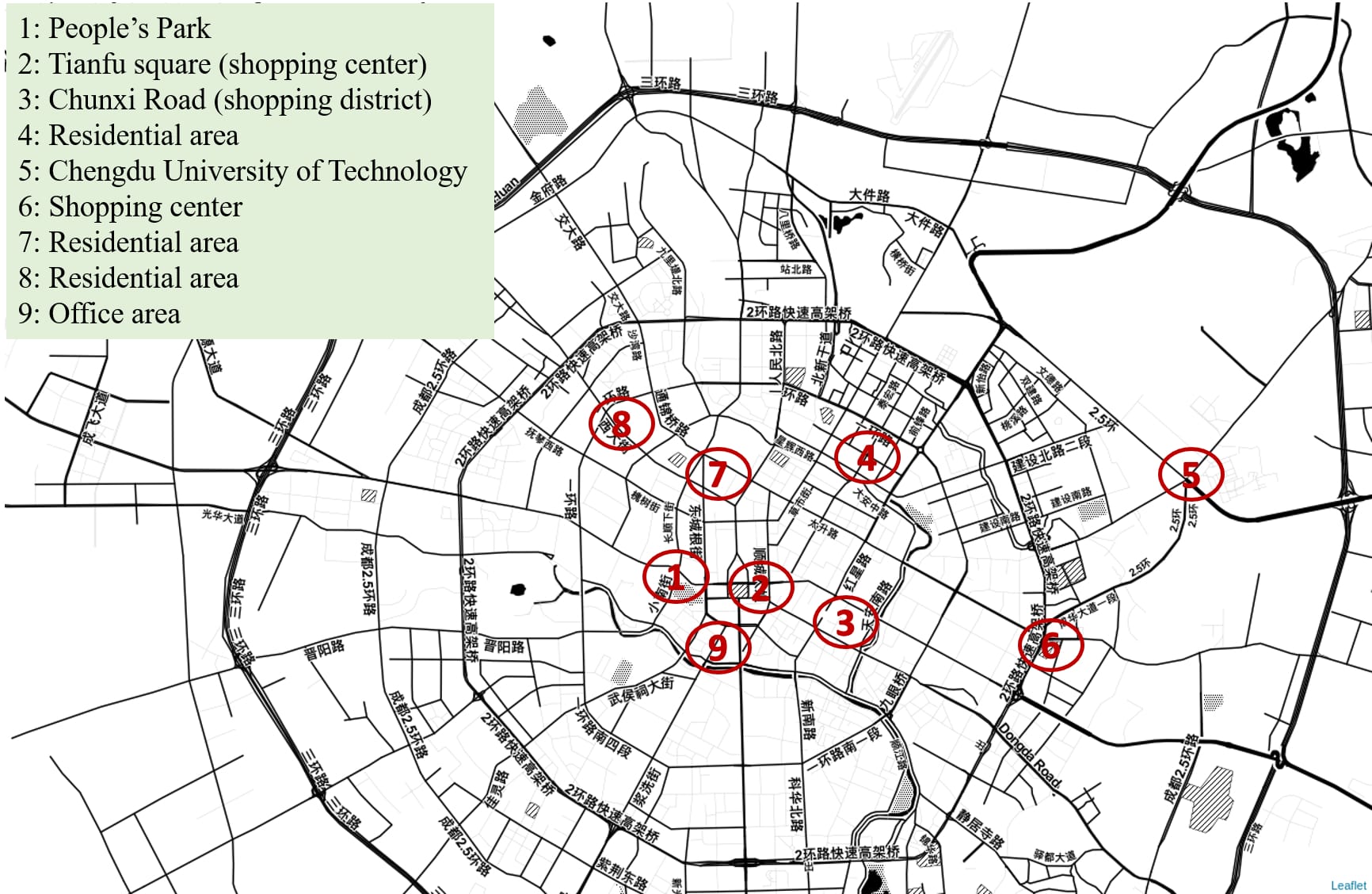}
	\caption{Overview of Chengdu network and its POIs}
	\label{fig:didi_map}
\end{figure}

We compute the origin-based and destination-based NDL by hour with Equation~\ref{eq:odndl1} and \ref{eq:odndl2}, averaged over all weekdays and weekends, and the results are presented in the supplementary materials. Each figure is color coded from green to red, representing the NDL from low to high. We select four interesting figures presented in Figure~\ref{fig:od_results}.

\begin{figure}[h]
	\centering
	\begin{subfigure}[b]{0.485\textwidth}
		\includegraphics[width=\textwidth]{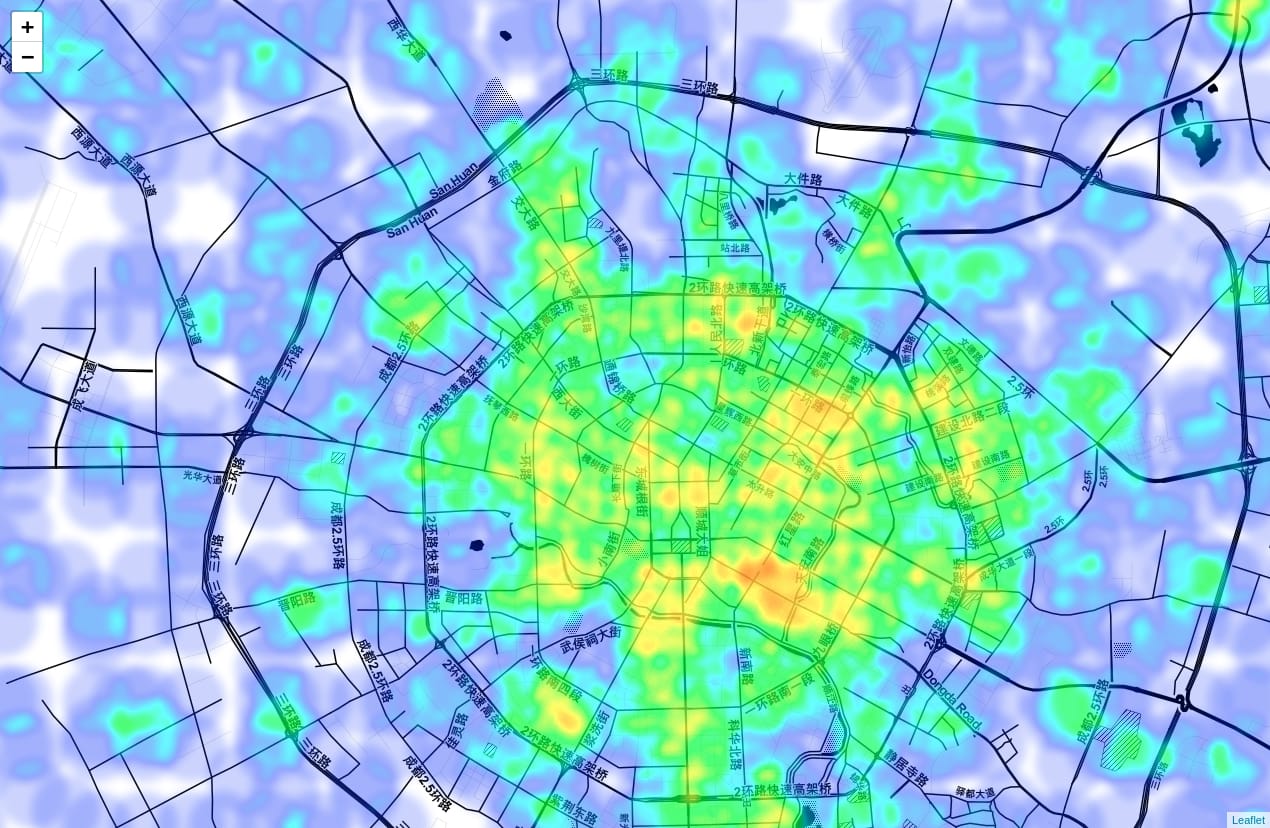}
		\caption{\footnotesize{Destination-based NDL from 8:00 AM to 12:00 AM on weekdays}}
		\label{fig:od_results1}
	\end{subfigure}
	\begin{subfigure}[b]{0.485\textwidth}
		\includegraphics[width=\textwidth]{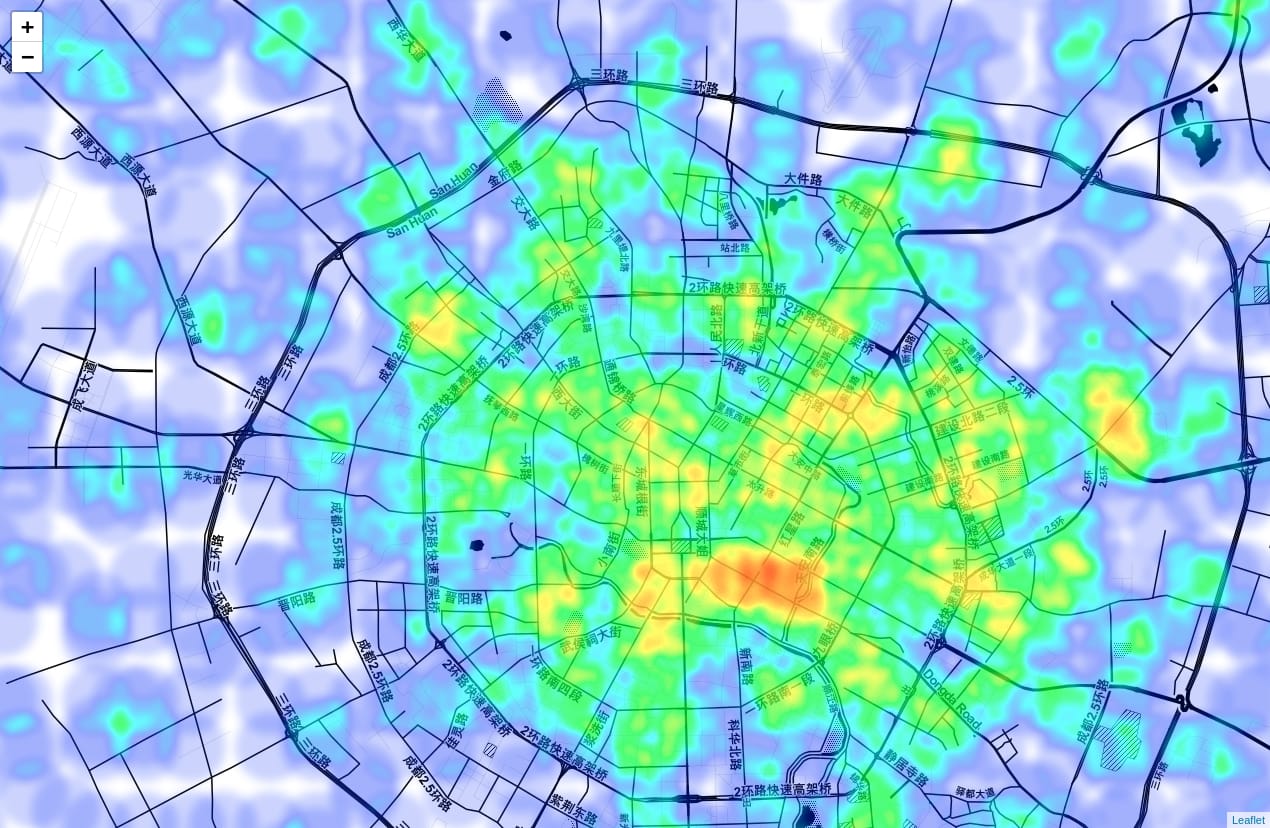}
		\caption{\footnotesize{Origin-based NDL from 16:00 PM to 20:00 PM on weekdays}}
		\label{fig:od_results2}
	\end{subfigure}
	\begin{subfigure}[b]{0.485\textwidth}
		\includegraphics[width=\textwidth]{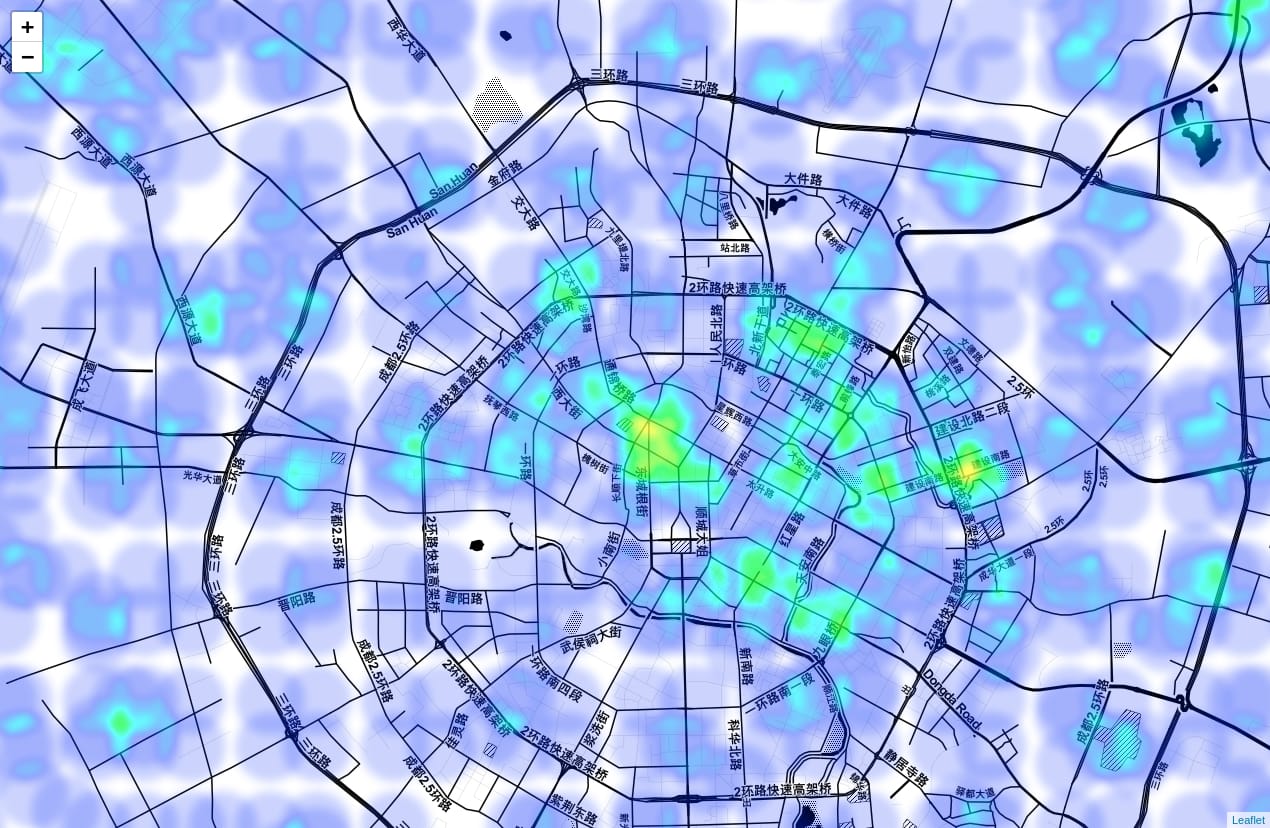}
		\caption{\footnotesize{Destination-based NDL from 0:00 AM to 4:00 AM on weekends}}
	\end{subfigure}
	\begin{subfigure}[b]{0.485\textwidth}
		\includegraphics[width=\textwidth]{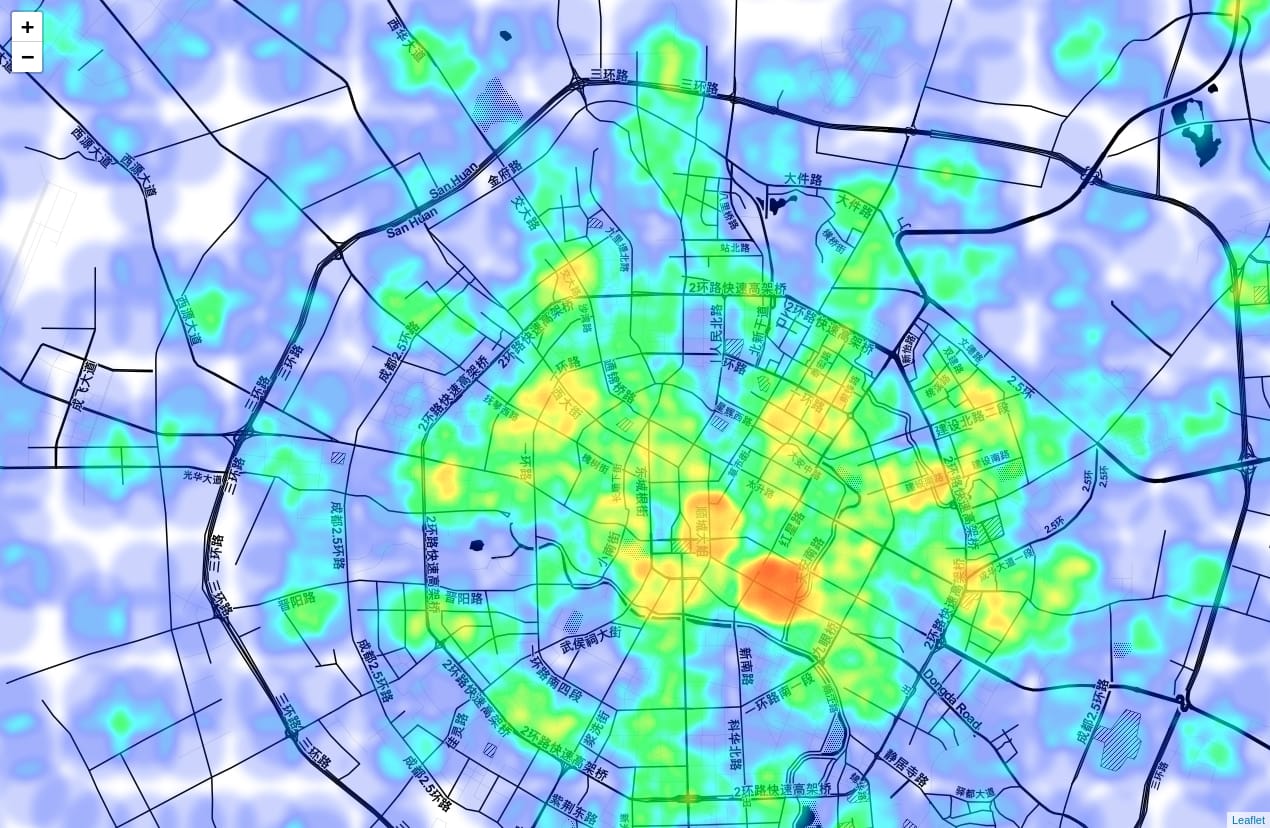}
		\caption{\footnotesize{Destination-based NDL from 16:00 PM to 20:00 PM on weekends}}
	\end{subfigure}
	\caption{\footnotesize{Selected origin and destination based NDL (Red color represents high NDL and green color represents low NDL)}}
	\label{fig:od_results}
\end{figure}

Generally, NDLs for all the sightseeing spots and shopping districts are considerably higher than that for residential areas and office areas, on both weekdays and weekends. One can read from Figure~\ref{fig:od_results1} that most of the inefficient trips are associated with destination Chunxi Road ($3$), which is the largest shopping district in Chengdu. The destination-based NDL in Figure~\ref{fig:od_results1} and the origin-based NDL in Figure~\ref{fig:od_results2} have a similar pattern, since they represent the same group of travelers during the morning peak and afternoon peak. In addition, the NDL for Chunxi Road in the afternoon is higher than that in the morning, because more travelers head for night activities from region $9$ (an office area).

\cleardoublepage

\section{Case study II: Pittsburgh metropolitan area with Uber Movement data}
\label{sec:uber}
In this section, we conduct a case study using zone-to-zone travel time data in Pittsburgh provided by the Uber Movement. Pittsburgh is a city in the Commonwealth of Pennsylvania with an urban population of $2.3$ million \citep{wikipitts}. Pittsburgh is known for its medical, education, manufacture and high-tech industries, more so than a tourist city.

In 2017, Uber released the zone-to-zone travel time data in Pittsburgh on Uber Movement. This dataset is consistent with our proposal of data sharing scheme. The Uber Movement data contains the mean, minimum and maximum of $\kappa_j$ for all TAZ pairs (or census tract pairs) over the years. The Uber Movement data we use here includes the hourly zone-to-zone travel time within the Allegheny County from January 1st, 2016 to June 30th, 2017. In this case study, we omit those results that are similar to Case Study I, and only highlight the differences between Pittsburgh and Chengdu.

Following the steps described in section~\ref{sec:agg}, NDLs can be computed based on Equation~\ref{eq:agg}.

\subsection{NDL aggregated over all OD pairs}
We start with the average NDL aggregated over all OD pairs on each day. We note that the definition of ``average NDL'' is the same as that in section~\ref{sec:didi_NDL_Ave}.
\subsubsection{Weekdays v.s. Weekends}

We plot time-of-day average NDL for each day (in transparent colors), along with the daily average (in solid colors), in Figure \ref{fig:uber_aveNDL}. As can be seen, different from the case in Chengdu, the NDL in Pittsburgh on weekdays is fairly close and stable both from hour to hour and from day to day, except for midnight to early morning. The NDL in Pittsburgh shows a slightly positive correlation with the demand level, not as pronounced as in Chengdu. On weekends, the NDL shows higher day-to-day variation than weekdays. However, NDLs are substantially more stable throughout the entire day than the NDLs in Chengdu. We speculate that the stable patterns of NDL in Pittsburgh (in terms of both time of day and day to day) are attributed to: 1) drivers' choices of departure time and routes that are relatively stable; and 2) limited roadway alternatives ({\em e.g.}, limited by tunnels and bridges).

\begin{figure}[h!]
	\centering
	\begin{subfigure}[b]{0.475\textwidth}
		\includegraphics[width=\textwidth]{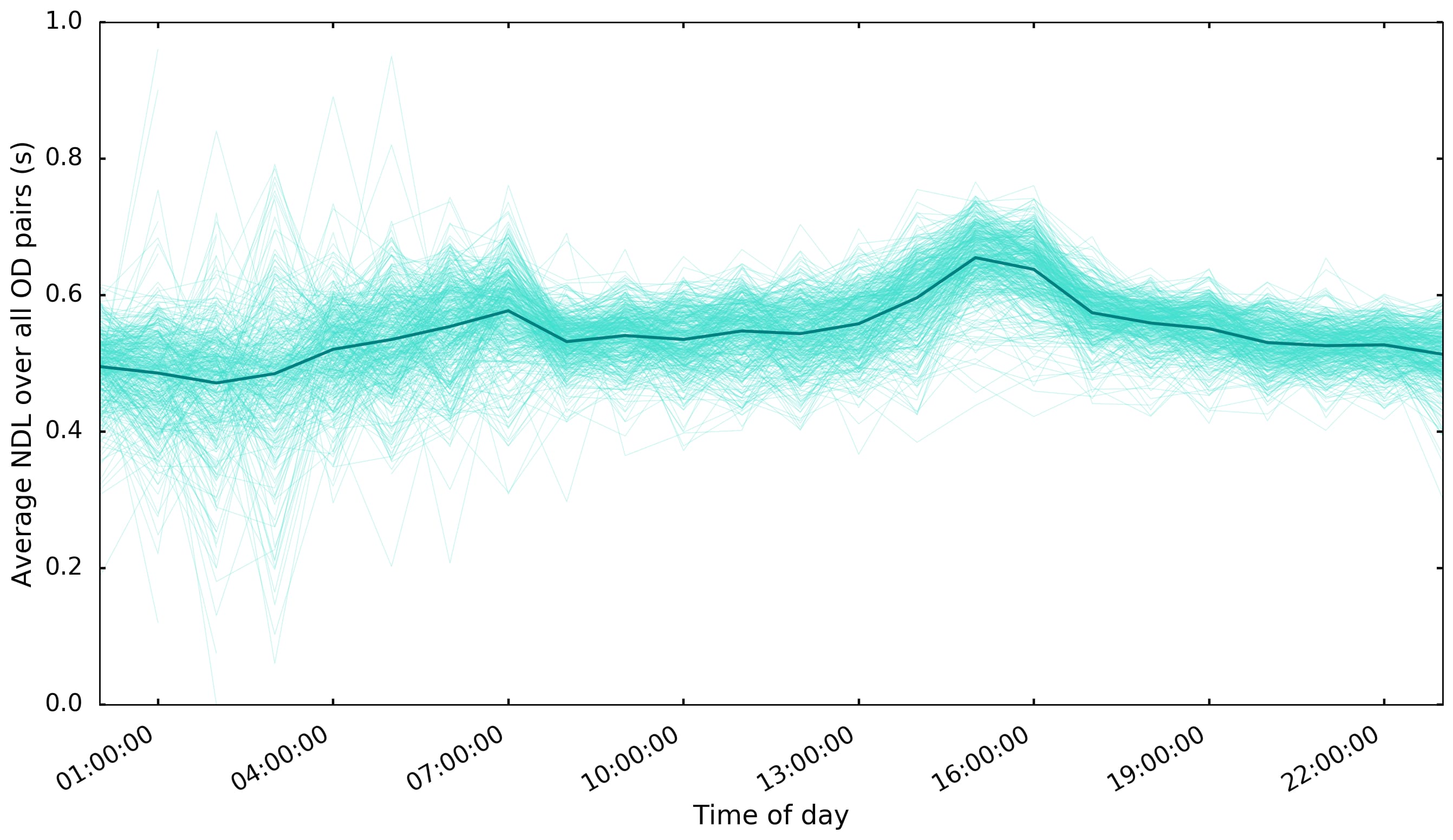}
		\caption{\footnotesize{Weekdays}}
	\end{subfigure}
	\begin{subfigure}[b]{0.475\textwidth}
		\includegraphics[width=\textwidth]{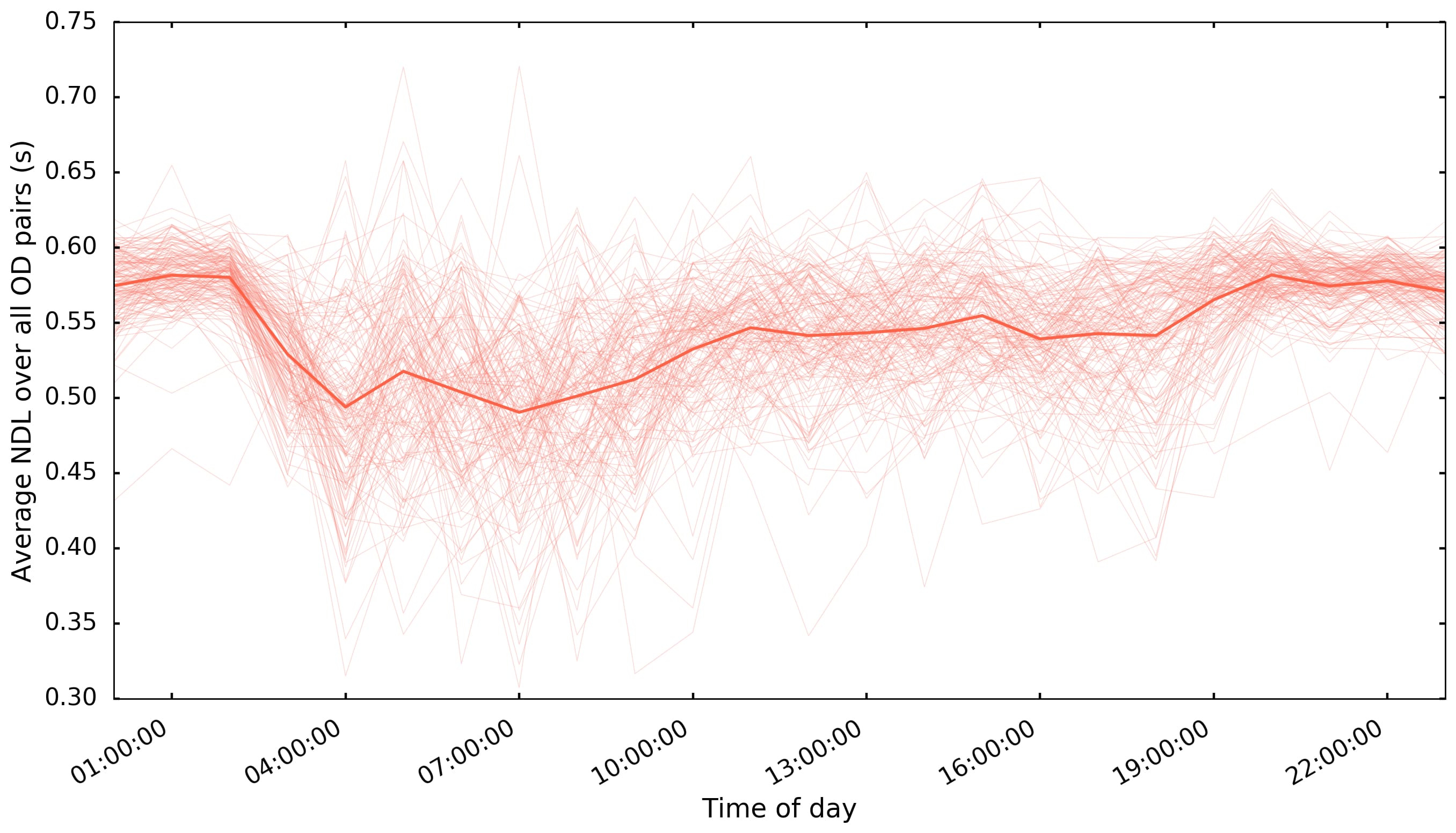}
		\caption{\footnotesize{Weekends}}
	\end{subfigure}
	\caption{\footnotesize{Average NDL by time of day, on weekdays and weekends (solid lines are the average of average NDL taken over all weekdays and weekends, respectively)}}
	\label{fig:uber_aveNDL}
\end{figure}

\subsubsection{Day of week and time of day effects on NDL measure}

Similar to section~\ref{sec:didi_time}, we plot NDL by time of day and day of week presented in Figure~\ref{fig:uber_compare}.

\begin{figure}[h!]
	\centering
	\begin{subfigure}[b]{0.7\textwidth}
		\includegraphics[width=\textwidth]{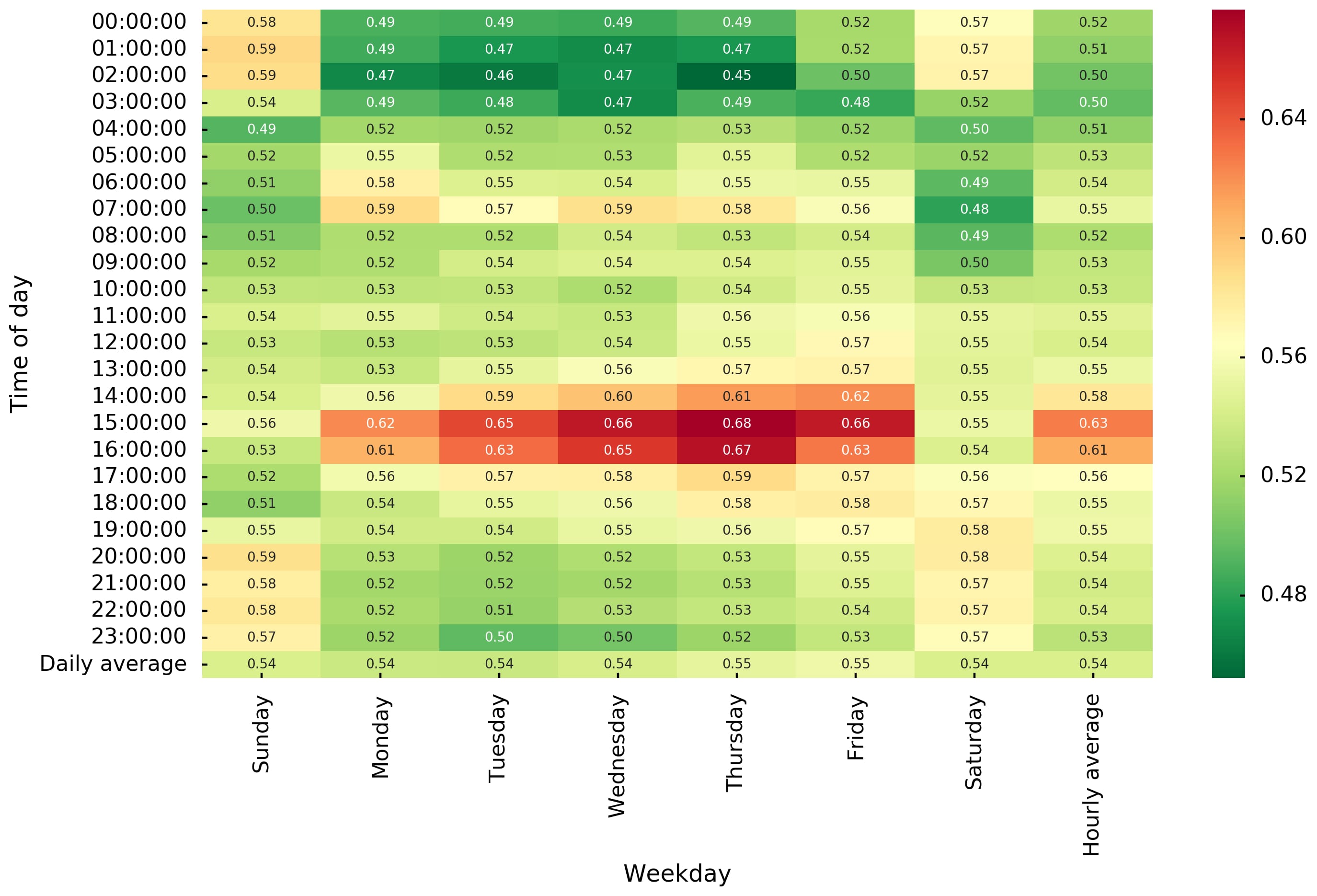}
		\caption{\footnotesize{Average NDL}}
		\label{fig:uber_compare1}
	\end{subfigure}
	\begin{subfigure}[b]{0.7\textwidth}
		\includegraphics[width=\textwidth]{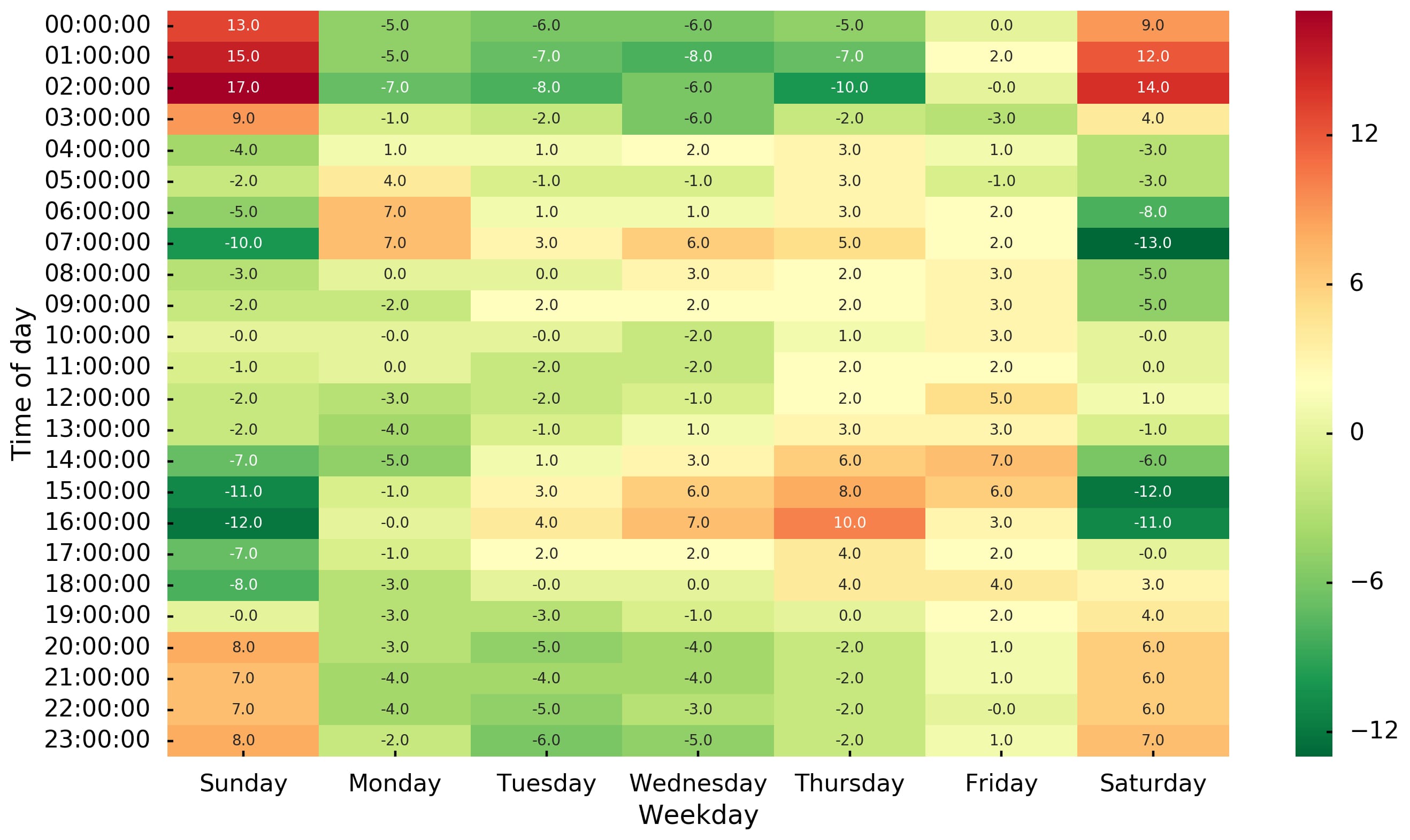}
		\caption{\footnotesize{Percentage change (\%)}}
		\label{fig:uber_compare2}
	\end{subfigure}
	\caption{\footnotesize{Average NDL and its percentage change by hour by day of week, compared to the daily average of NDL taken over
			all days of week }}
	\label{fig:uber_compare}
\end{figure}

As can be seen from Figure~\ref{fig:uber_compare1}, NDLs on weekdays are generally close from day to day, and the major spike occurs in the afternoon peak, possibly due to heavy congestion. From Figure~\ref{fig:uber_compare2}, midnight NDLs on weekends are significantly higher, possibly as a result of night activities. The NDL of early morning is the lowest over all weekends, probably because there is no congestion and trips are made straight to the destinations.

\subsection{NDL for individual OD pairs}
Now we examine the spatio-temporal NDL of each OD pair over the course of study time horizon.
\subsubsection{Time of day effects on NDL measures for a single OD pair}

Similar to Section~\ref{sec:todNDL}, we plot the time-of-day NDLs over the 18 months for randomly selected $12$ OD pairs in Figure~\ref{fig:uber_12od}. One can clearly observe the two majors spikes in the morning peak and afternoon peak.

\begin{figure}[h]
	\centering
	\includegraphics[scale = 0.15]{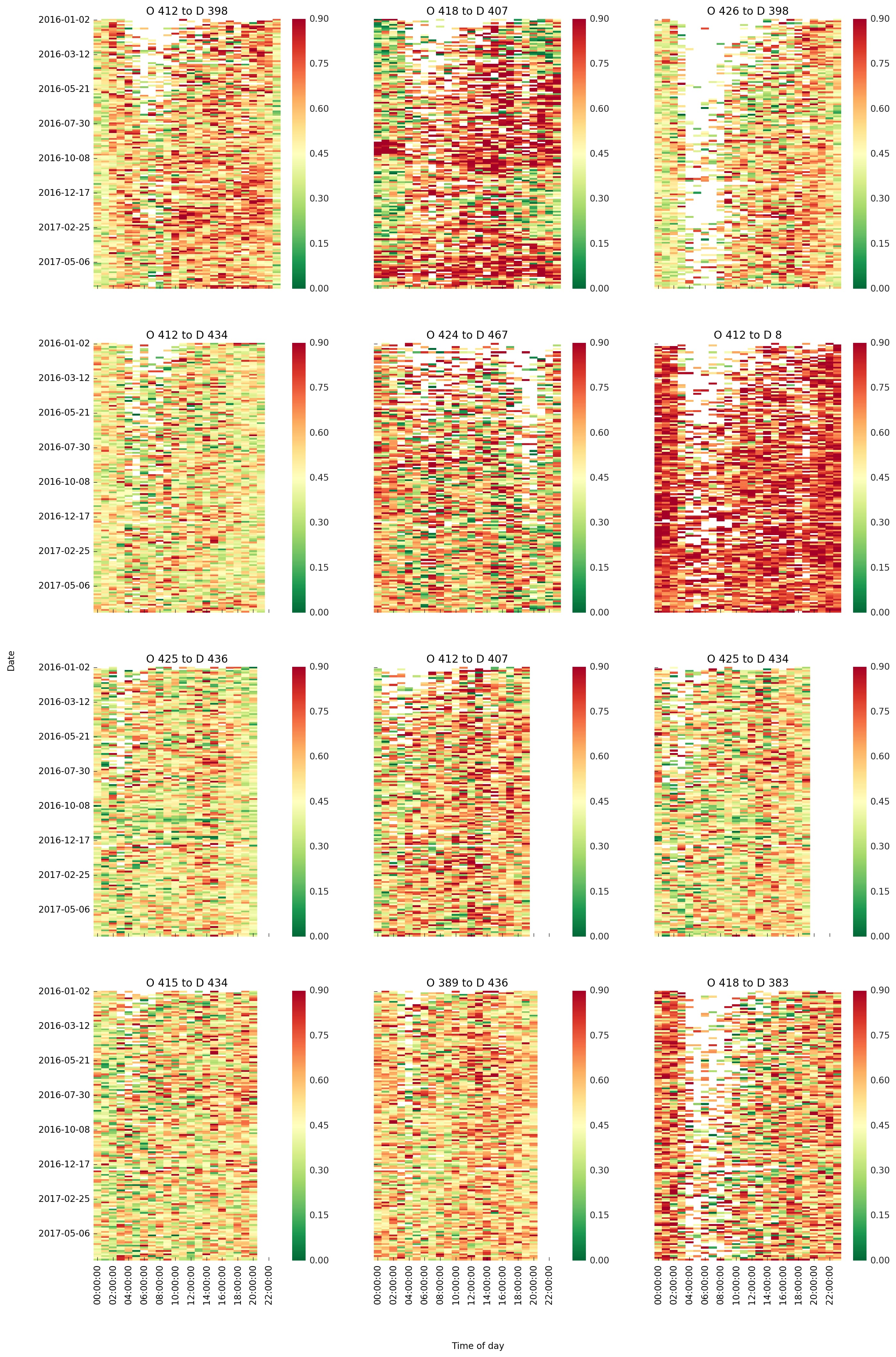}
	\caption{\footnotesize{Time-of-day NDL profile for randomly selected $12$ OD pairs}}
	\label{fig:uber_12od}
\end{figure}

There are several OD pairs with high NDL over the 18 months, such as $(418, 407)$ and $(412, 8)$. To verify the results, we further query the weekday travel times on Google Maps. The Google Maps query configurations are set such that departure time is 16:30 PM and traffic pattern is on an average Wednesday, then the results are presented in Figure~\ref{fig:example}.

\begin{figure}[h!]
	\centering
	\begin{subfigure}[b]{0.475\textwidth}
		\includegraphics[width=\textwidth]{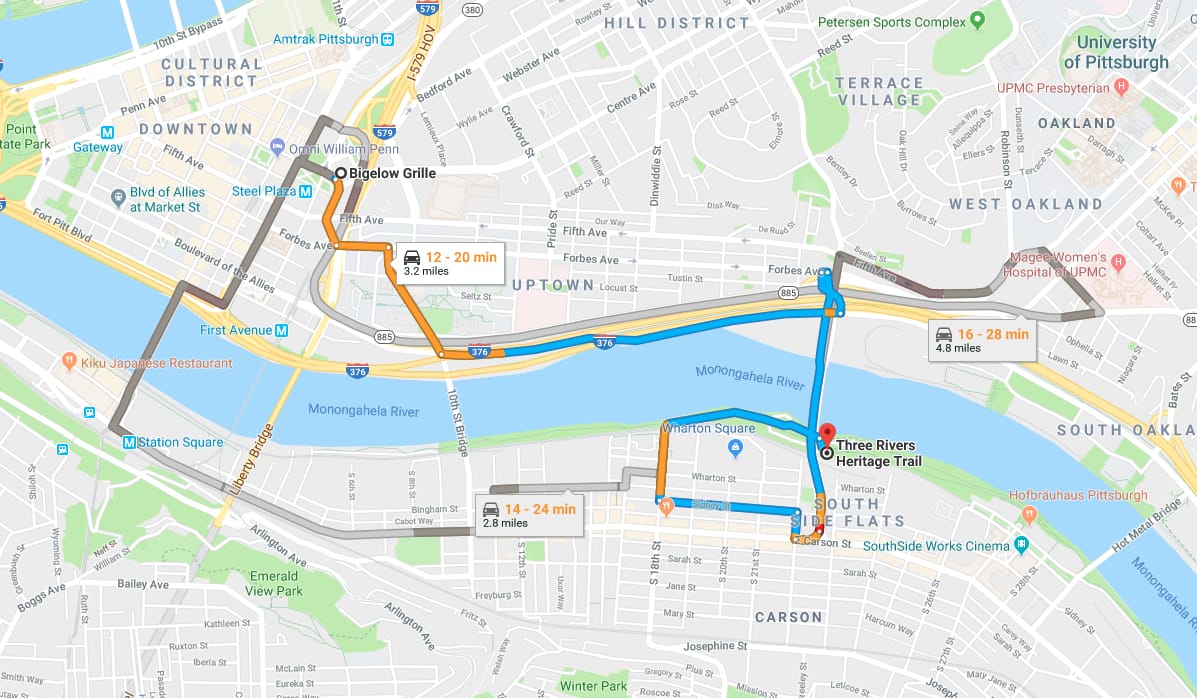}
		\caption{\footnotesize{From O $418$ to D $407$}}
		\label{fig:example1}
	\end{subfigure}
	\begin{subfigure}[b]{0.475\textwidth}
		\includegraphics[width=\textwidth]{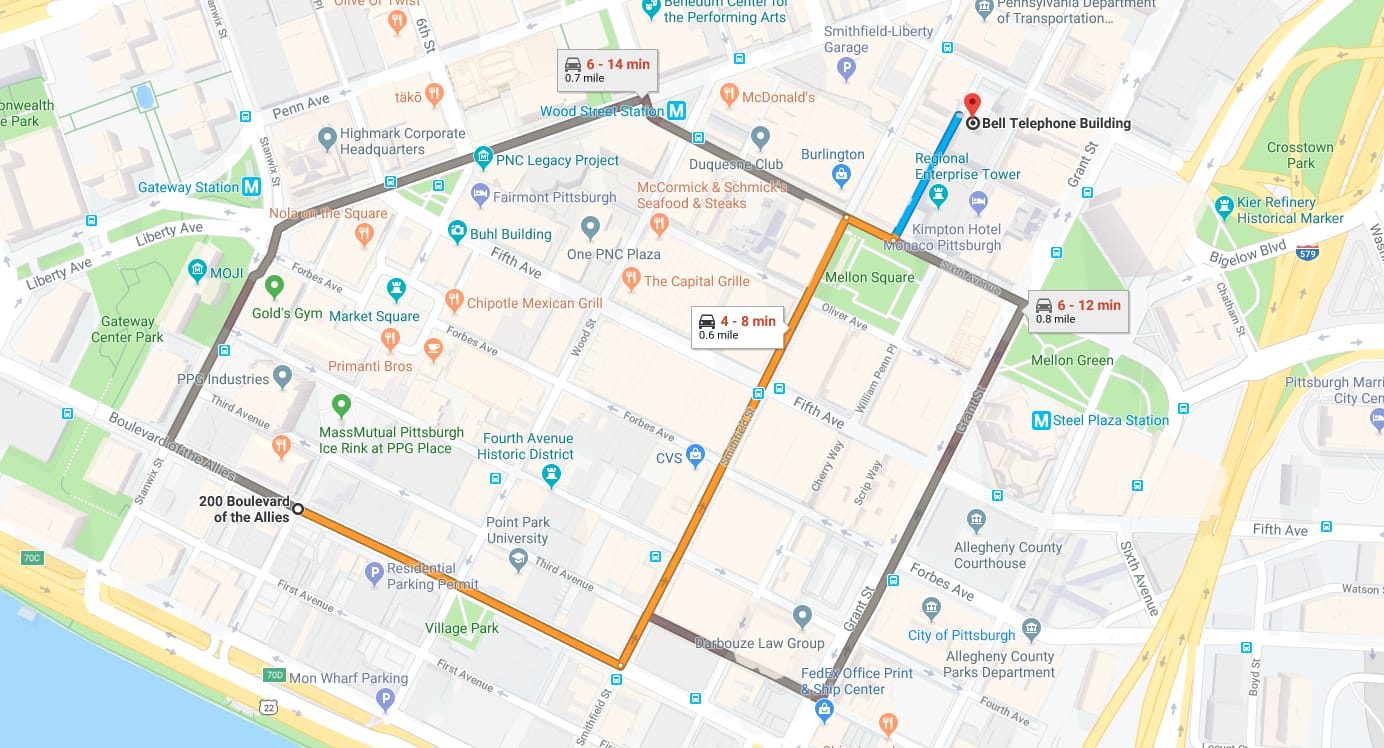}
		\caption{\footnotesize{From O $412$ to D $8$}}
		\label{fig:example12}
	\end{subfigure}
	\caption{\footnotesize{Estimated travel time and routes from Google Maps}}
	\label{fig:example}
\end{figure}

One can read from Figure~\ref{fig:example} that there are multiple routes to choose from and the variation of estimated travel time is high for both OD pairs. With high day-to-day variation, travelers can hardly decide which route is the user optimal path. It partially explains why the NDL is high for both OD pairs.

\subsubsection{Origin-based and destination-based NDL measures}

Figure~\ref{fig:uber_map} presents an overview of Pittsburgh metropolitan area and some traffic analysis zones. Again, Pittsburgh is not a city with much tourism demand, so most of its traffic is likely made by local residents.

\begin{figure}[h]
	\centering
	\includegraphics[scale = 0.2]{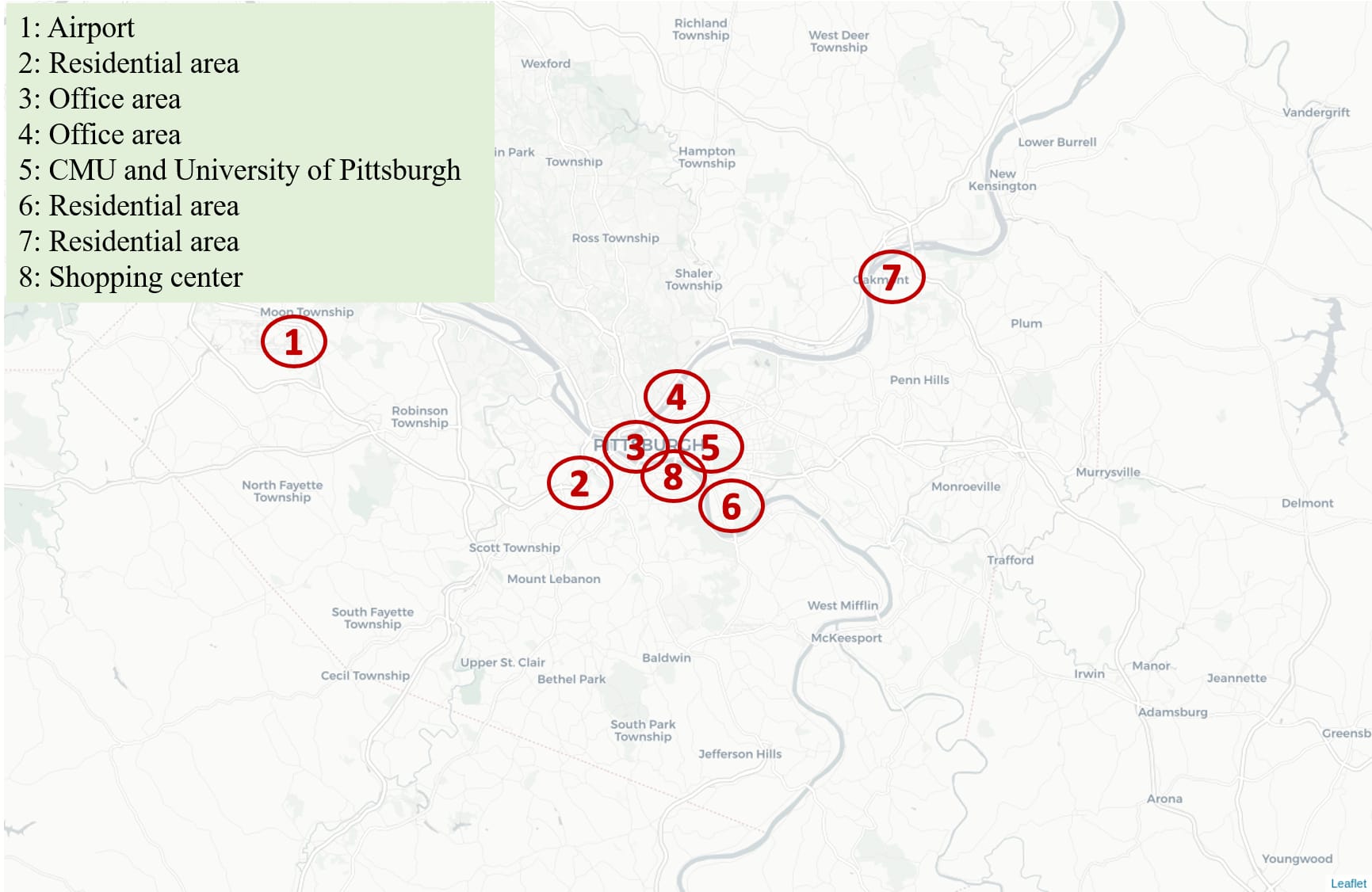}
	\caption{Overview of the Pittsburgh network and some of its main traffic analysis zones}
	\label{fig:uber_map}
\end{figure}

Similar to Section~\ref{sec:od}, we compute the origin-based and destination-based NDL by hour, averaged over all days on weekdays/weekends, and the results are presented in the supplementary materials. Four interesting origins/destinations are selected and presented in Figure~\ref{fig:uber_od_results}.

\begin{figure}[h]
	\centering
	\begin{subfigure}[b]{0.485\textwidth}
		\includegraphics[width=\textwidth]{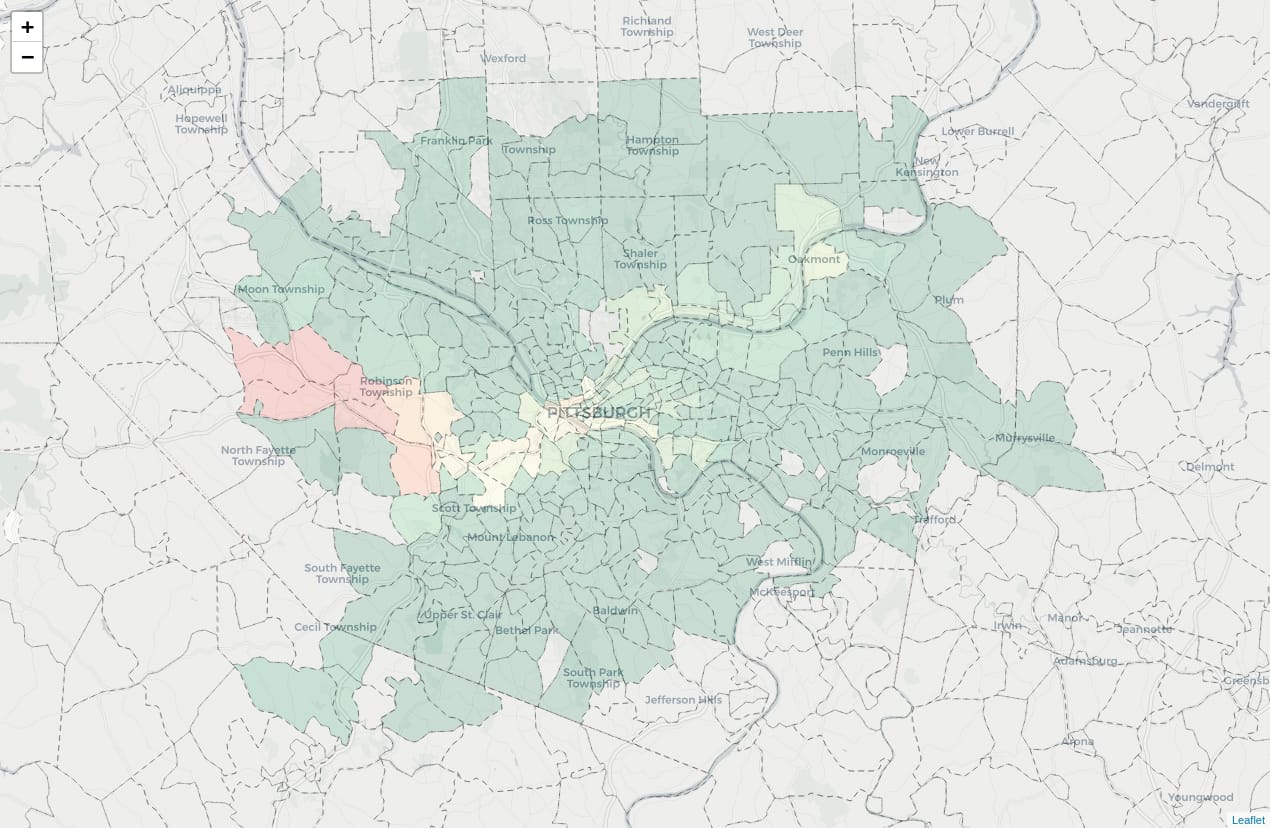}
		\caption{\footnotesize{Destination-based NDL from 4:00 AM to 8:00 AM on weekdays}}
	\end{subfigure}
	\begin{subfigure}[b]{0.485\textwidth}
		\includegraphics[width=\textwidth]{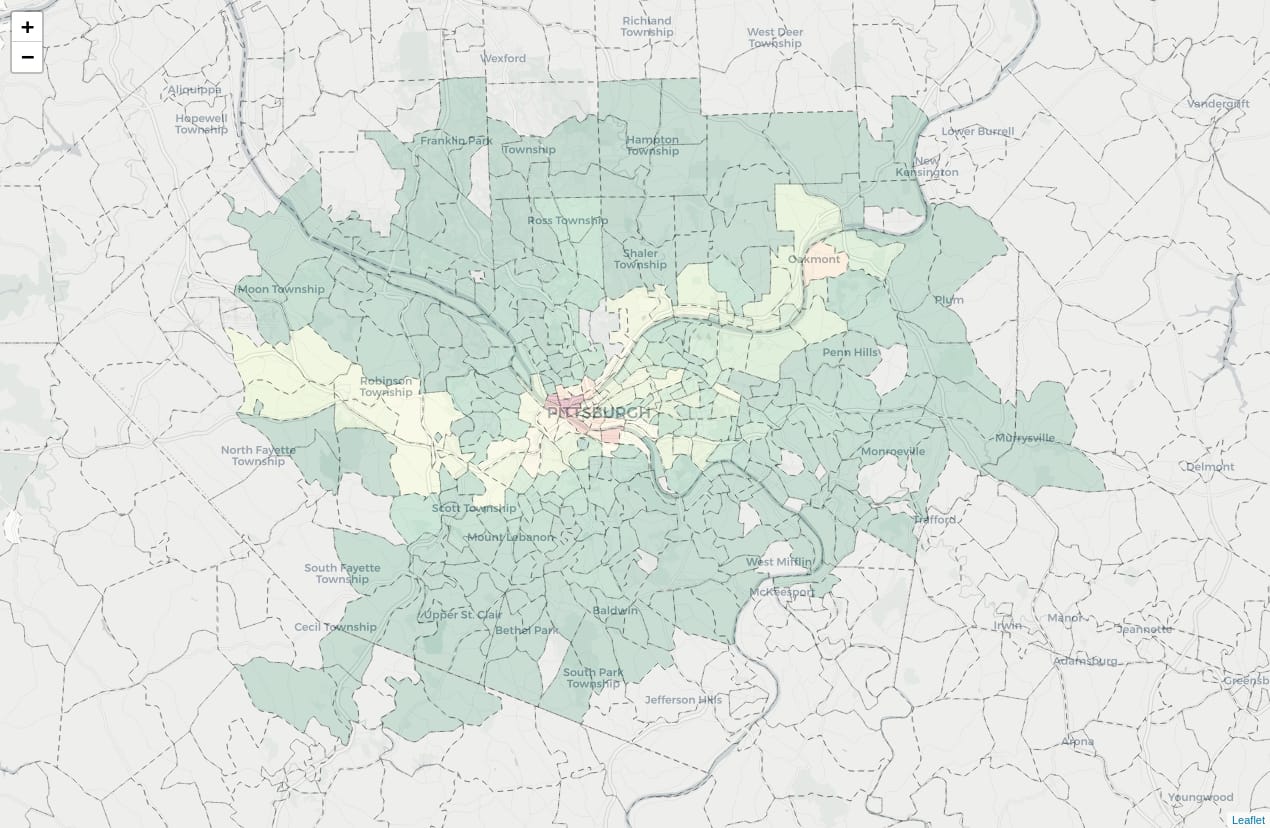}
		\caption{\footnotesize{Destination-based NDL from 8:00 AM to 12:00 PM on weekdays}}
	\end{subfigure}
	\begin{subfigure}[b]{0.485\textwidth}
		\includegraphics[width=\textwidth]{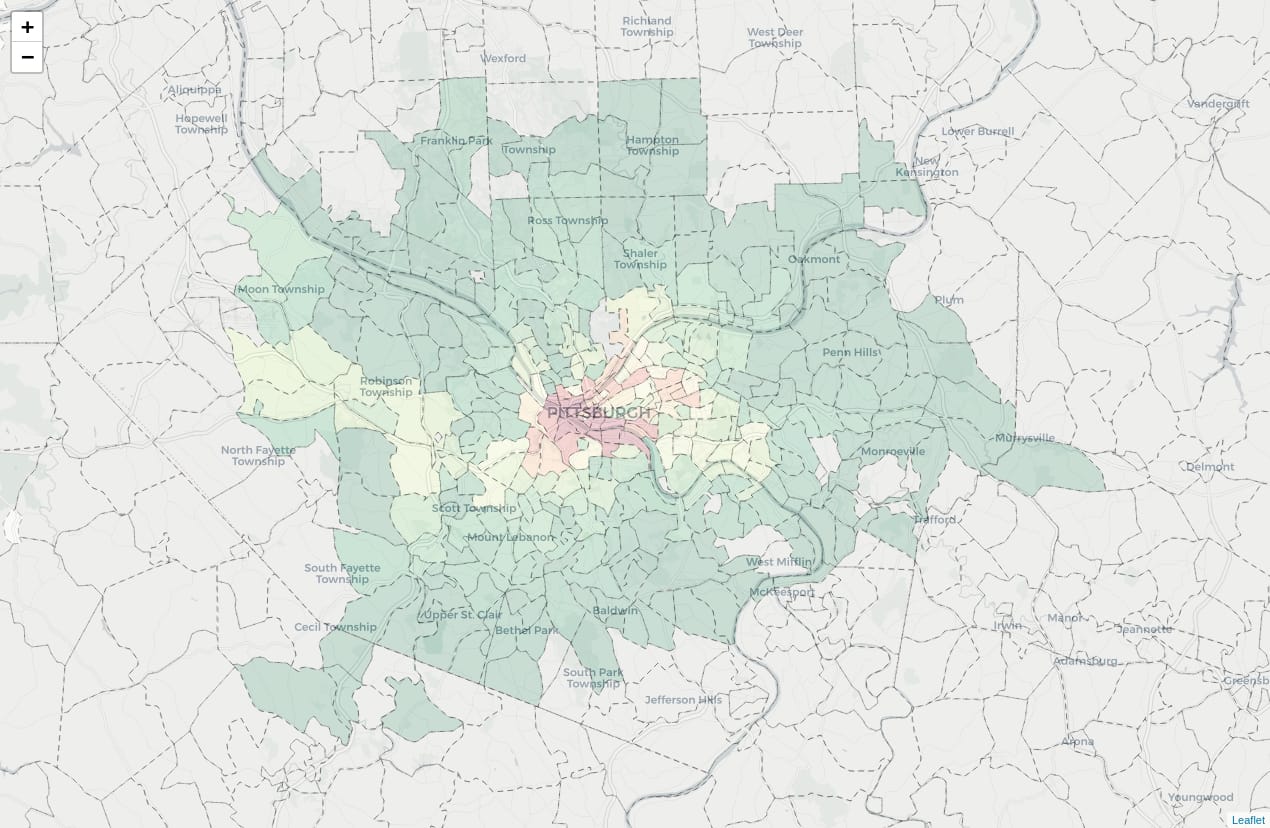}
		\caption{\footnotesize{Origin-based NDL from 0:00 AM to 4:00 AM on weekends}}
	\end{subfigure}
	\begin{subfigure}[b]{0.485\textwidth}
		\includegraphics[width=\textwidth]{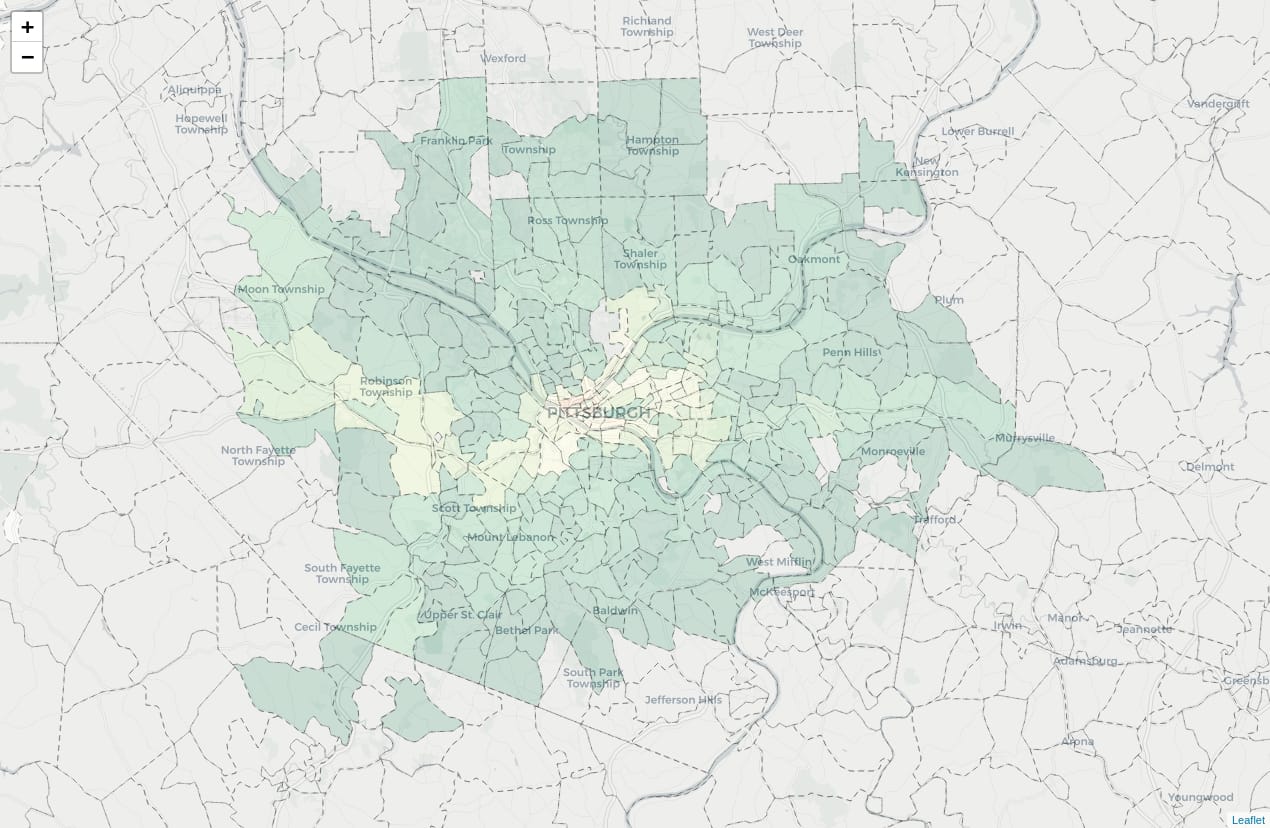}
		\caption{\footnotesize{Origin-based NDL from 8:00 PM to 12:00 PM on weekends}}
	\end{subfigure}
	\caption{\footnotesize{Selected origin and destination based NDL (Red color represents high NDL and green color represents low NDL)}}
	\label{fig:uber_od_results}
\end{figure}

Generally, the origin/destination-based NDL verifies the observations in Figure~\ref{fig:uber_aveNDL}. From 4:00 AM to 8:00 AM on weekdays, NDL is high for the airport area as destination (the zones in red). High NDL indicates there exist substantial demand during this time window and they all have multiple routes to choose from to the airport with similar travel times. From 8:00 AM to 12:00 PM on weekdays, high NDL largely occurs in the downtown area,  possibly as a result of morning commute to an area with heavy congestion. The origin-based NDL around the downtown area is also high from midnight to 4:00 AM, as a result of late night weekend activities.  In contrast, the NDL from 8:00 AM to 12:00 PM on weekends is relatively low across the entire region.

\subsection{NDL based traffic management}

Now we examine the proposed traffic management method. We have made efforts on calibrating the traffic flows, route choice models and OD demand in the Pittsburgh metropolitan area in a separate project \citep{greenfield}, and we adopt OD demand and path flow from that project, and use Uber Movement data (namely the average zone-to-zone travel time) as the expected network conditions by hour of day.

The proposed method routes the vehicles to user optimal paths. Thus, the total time can be effectively reduced if the management method can accurately detect vehicles that deviates from user optimal paths and identify their user optimal paths. The objective of this experiment is to show that NDL can potentially be used as a criterion to identify such vehicular flow and their optimal paths.  In addition, we would like to show that the proposed management method is still effective, even the estimated NDL is not up to date with an estimation delay. Since the control effects are mainly determined by whether the proposed method can find optimal user paths for a small fraction of vehicles based on NDL rather than the change in travel time if those vehicles are re-routed, we make a bold assumption that network conditions (in terms of travel time) will remain the same before and after the NDL-based control. This assumption is based on three reasons: 1) the hypothetical network conditions after control are unknown in the experiments since the field testing is too costly to do; 2) at most $5\%$ of the total demand will be re-routed, so the impact to network traffic conditions is secondary comparing to the impact of re-routed vehicles; 3) when evaluating the effectiveness of routing in terms of total travel time, the change in network conditions upon re-routing a small fraction of vehicles is likely to alleviate the total reduction (namely the effectiveness), but not the trend of reduction.  Our goal in this experiment is to show that NDL has potentials to reduce total travel time, more so than to what extent it reduces congestion. %

We study the time period 5:00 AM - 10:00 AM  from Jan. 1st, 2016 to Jun. 30th, 2017, and we assume the route choice and traffic demand are identical within all weekdays and weekends, respectively. On each day, we apply Algorithm~\ref{alg:control} with different control ratio $\alpha \in [0.01, 0.02, 0.03, 0.04, 0.05]$ and time delay of NDL estimation $\texttt{lag} \in [0, 1, 2, 3,4]$ hours at the beginning of each hour. The control effectiveness is evaluated by the reduction of total travel time of the network, and the results are presented in Table~\ref{tab:control_res}.

\begin{table}[h]
	\centering
	\caption{\footnotesize Total travel time reduction by percentage $(\%)$}
	\label{tab:control_res}
	\begin{tabular}{|c||*{5}{c|}}\hline
		\backslashbox{\texttt{lag} (hr)}{$\alpha$}
		&\makebox[3em]{0.01}&\makebox[3em]{0.02}&\makebox[3em]{0.03}
		&\makebox[3em]{0.04}&\makebox[3em]{0.05}\\\hline\hline
		0         & 13.54 & 14.15 & 14.18 & 14.19 & 14.19 \\
		1         & 7.56  & 8.05  & 8.06  & 8.07  & 8.07  \\
		2         & 5.93  & 6.22  & 6.23  & 6.23  & 6.23  \\
		3         & 4.20  & 4.27  & 4.28  & 4.28  & 4.28  \\
		4         & 3.14  & 3.39  & 3.50  & 3.55  & 3.56  \\ \hline
	\end{tabular}
\end{table}
The control with $\texttt{lag} = 0$ is the ideal situation in which the real-time NDL is immediately available. In this case, controlling $1\%$ of the vehicle will reduce the total travel time by $13.54\%$, almost as effective as re-routing 5\% of vehicles. Traffic routing with $\texttt{lag} =1$ hour is a more realistic situation in which we can obtain NDL information that is estimated one hour ago. We directly use $C_{rs}(t - \texttt{lag})$ as $C_{rs}(t)$ and the control effect is still significant. Figure~\ref{fig:control} compares total travel time before and after the NDL-based control on each day when $\texttt{lag} = 1 \text{ hour}$. It indicates that the control is effective for most of those days. With ratio $\alpha$ greater than $2\%$, the improvement in control effectiveness becomes marginal.

\begin{figure}[h]
	\centering
	\includegraphics[scale = 0.3]{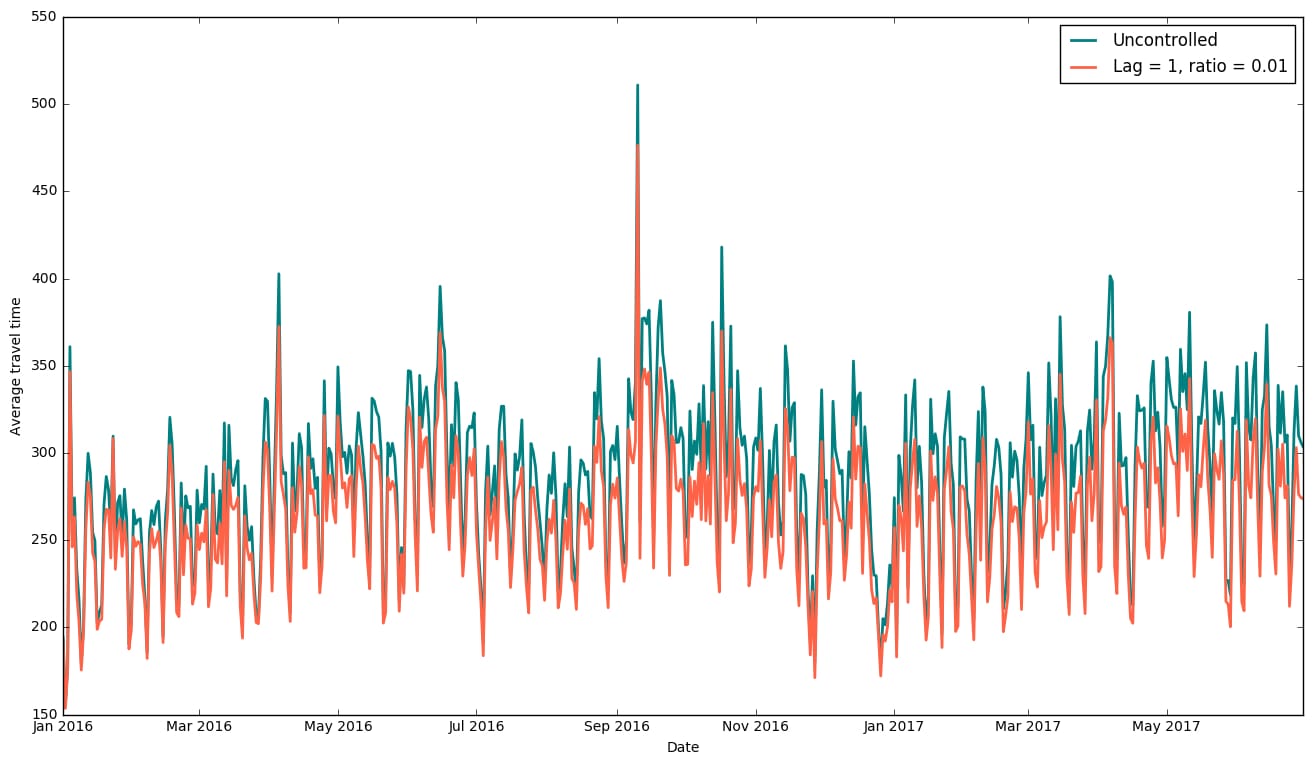}
	\caption{Comparison of before and after control in terms of total travel time (seconds)}
	\label{fig:control}
\end{figure}

\subsection{Key observations}
Combining the experiment results in Section~\ref{sec:didi} and Section~\ref{sec:uber}, we summarize the following key observations regarding the NDL measure.
\begin{itemize}
	\item The NDL does not necessarily depend on the traffic demand. Instead, it is dependent on whether the travels are recurrent or non-recurrent.
	\item High NDL is spatially and temporally sparse.
	\item Though the NDL pattern varies substantially from day to day, the average NDL pattern is stable and interpretable.
	\item Real-world networks are not under user equilibrium based on individual travel times. The use of information provision can significantly reduce traffic congestion.
\end{itemize}

\clearpage
\section{Conclusion}
\label{sec:con}

This paper first reviews the formulations for DUE and discuss how the DUE conditions may be violated in real networks. We discuss the concept of network disequilibrium level (NDL) and formulate the NDL measure based on merit function. We further discuss why NDL measure is practically implausible to estimate in real networks using conventional traffic data.

Next, this paper summarizes the characteristics of ride-sourcing vehicle (RV) data. An estimation method for the NDL is proposed with the trajectory-level (or trip-level) RV data. We propose a data sharing scheme for TNCs so that TNCs can release zone-to-zone aggregated data to the public without revealing either personally identifiable information or trip-level information that may be business sensitive. An estimation method for NDL with the zone-to-zone aggregated data is further proposed. We prove that the user optimal routing problem can be reduced to the NDL minimization problem, and an NDL-based traffic routing method is proposed. The traffic routing method basically prioritizes vehicles that deviate the most from their user optimal paths, and re-routing those vehicles may achieve effective reduction in total travel time by controlling a small fraction of those vehicles.

The NDL measure and NDL-based traffic management framework are examined in two real-world large-scale networks: the City of Chengdu with trajectory-level RV data, and the Pittsburgh metropolitan area with zone-to-zone travel time data.
We found that, for each city, NDLs are likely high when travel demand is high (thus when congestion is mild or heavy) on weekdays. Generally weekend midnight exhibits higher NDLs than weekday midnight. Many NDL patterns are different between Chengdu and Pittsburgh, which are attributed to unique characteristics of both demand and supply in each city. In Pittsburgh, weekend NDLs are generally less than weekdays except late evening and midnight, and day-to-day NDLs vary more substantially on weekend than on weekdays. NDLs of Chengdu are generally larger than Pittsburgh. In Chengdu, weekend NDLs are greater than weekdays. On the othe hand, NDL in Pittsburgh is much more stable from day to day, and from hour to hour, comparing Chengdu. It is likely attributed to 1) drivers' choices of departure time and routes that are relatively stable in Pittsburgh; and 2) limited roadway alternatives (e.g., limited by tunnels and bridges). In addition, we  observe that OD pairs with high NDL are spatially and temporally sparse for both cities. For the Pittsburgh network, we evaluate the effectiveness of NDL-based traffic routing, which shows great potential to reduce total travel time with routing a small fraction of vehicles (1\% in the experiments), even with dated NDL that is estimated in the prior hour.


In the near future, we will further explore the NDL patterns. Since NDL is closely related to the non-recurrent factors such as incidents, weather, and construction plans, we will collect non-recurrent incidents data in Pittsburgh and identify their impact to NDL. This will help us better understand the mechanism of NDL and propose more effective control strategies to traffic management.

\section*{Supplementary materials}
The browser-based dynamic network disequilibrium visualizations for Chengdu, China with DiDi Chuxing data and Pittsburgh, USA with Uber Movement data are archived at  Github\footnote{\url{https://github.com/Lemma1/Measuring-the-dynamic-disequilibrium-level}}.

\section*{Acknowledgments}
We would like to thank Uber Movement and DiDi Chuxing Gaia Initiative for providing the data. This research is funded in part by National Science Foundation CMMI-\#1751448 and Mobility 21, a national University Transportation Center on mobility funded by US Department of Transportation. The contents of this report reflect the views of the authors, who are responsible for the facts and the accuracy of the information presented herein.

\cleardoublepage
\bibliography{report}
\cleardoublepage

\end{document}